\definecolor{darkblue}{rgb}{0, 0, 0.5}
\definecolor{codegreen}{rgb}{0,0.6,0}
\definecolor{codegray}{rgb}{0.5,0.5,0.5}
\definecolor{codepurple}{rgb}{0.58,0,0.82}
\definecolor{backcolour}{rgb}{0.95,0.95,0.92}
\definecolor{deepblue}{rgb}{0,0,.9}
\lstdefinestyle{mystyle}{
  language = Python,
  otherkeywords={@},
  deletekeywords={filter},
  deletendkeywords={max, min, abs},
  backgroundcolor=\color{white}, commentstyle=\color{codegreen},
  keywordstyle=\color{magenta},
  numberstyle=\tiny\color{codegray},
  stringstyle=\color{codepurple},
  basicstyle=\ttfamily\footnotesize,
  breakatwhitespace=false,         
  breaklines=true,                 
  captionpos=b,                    
  keepspaces=true,                 
  numbers=left,                    
  numbersep=5pt,                  
  showspaces=false,                
  showstringspaces=false,
  showtabs=false,                  
  tabsize=2,
  emph={zeros, prange, jit, power, std, rvs, maximum, roll, convolve, array, max, min, pdf, ppf, uniform, random, sqrt, linspace, sort, mean, percentile, sign, abs, powerlaw, beta, stats},
  emphstyle=\color{deepblue}
}
\newcommand{\boldred}[1]{\boldsymbol{\textcolor{BrickRed}{#1}}}
\newtheorem{rem}{Remark}
\newtheorem{assumption}{Assumption}
\crefname{assumption}{Assumption}{Assumptions}
\crefname{prop}{Proposition}{Propositions}
\crefname{lem}{Lemma}{Lemmas}
\crefname{thm}{Theorem}{Theorems}
\newtheorem{cor}{Corollary}
\newtheorem{lem}{Lemma}
\newtheorem{thm}{Theorem}
\newtheorem{prop}{Proposition}
\newtheorem*{thm*}{Theorem}
\title{Nonparametric inference on counterfactuals\\ in first-price auctions\footnote{This paper supersedes the earlier version entitled ``Simple nonparametric inference for first-price auctions via bid spacings'' available at \href{https://arxiv.org/pdf/2106.13856v1.pdf}{https://arxiv.org/pdf/2106.13856v1.pdf}. A Python package implementing the methodology in the paper is available at \href{https://pypi.org/project/simple-fpa}{https://pypi.org/project/simple-fpa}}}
\author{
\setcounter{footnote}{1}
    Pasha Andreyanov\thanks{
	Faculty of Economic Sciences, HSE University.
Email: \href{pandreyanov@gmail.com}{pandreyanov@gmail.com}
	}
	\and Grigory Franguridi\thanks{
		Center for Economic and Social Research, University of Southern California.
		Email: \href{franguri@usc.edu}{franguri@usc.edu}
	}
}
\date{\today}
\begin{document}

\maketitle

\begin{abstract}

In a classical model of the first-price sealed-bid auction with independent private values, we develop nonparametric estimators for several policy-relevant targets, such as the bidder's surplus and auctioneer's revenue under counterfactual reserve prices. Motivated by the linearity of these targets in the quantile function of bidders' values, we propose an estimator of the latter and derive its Bahadur-Kiefer expansion.
This makes it possible to construct exact uniform
confidence bands and test complex hypotheses about the auction design.
% which carries towards the construction of valid uniform confidence bands via undersmoothing. This allows the testing of complex hypotheses, such as the optimality of reserve prices.
Using the data on U.S. Forest Service timber auctions, we test whether setting zero reserve prices in these auctions was revenue maximizing.

\medskip

\noindent \textbf{JEL Classfication:} C57, D44

\medskip

\noindent \textbf{Keywords:} first-price auction, uniform inference, quantile density, spacings, counterfactual reserve price, Bahadur-Kiefer expansion, USFS auctions

\end{abstract}
\newpage

\section{Introduction}
\label{sec:intro}

In the empirical studies of first-price auctions, a structural approach to estimation and inference is often used. This approach exploits restrictions derived from economic theory to recover bidders' latent valuations from the observed bids. With these valuations, the researcher can make predictions about the effects of changes in auction rules or the composition of bidders. 
Various methods, in both parametric and nonparametric frameworks, have been developed; see, e.g., \cite{paarsch2006introduction}, \cite{athey2007nonparametric}, and \cite{perrigne2019econometrics} for an overview. 

Since the seminal papers by \cite{elyakime1994first}, \cite{guerre2000optimal} and \cite{li2000conditionally}, it is the probability density function (PDF) of bidders' values that has been considered a default target of nonparametric analysis.
This choice is natural since it allows for constructive identification \citep{matzkin2013nonparametric} when valuations are independent\footnote{With correlated valuations, nonparametric identification is partial, see \cite{aradillas2013identification}.}.
However, researchers are often interested in revenue and other targets, which are \emph{nonlinear} functionals of the density.
Even with the asymptotic theory for the density estimator developed, constructing the confidence intervals and bands for these targets is not simple.
This often leads researchers to report confidence intervals based on simulation from the estimated PDF \citep[e.g.,][]{li2003timber} or none at all. 

Our primary focus is testing complex hypotheses about several natural targets, such as total surplus, bidders' surplus, and auctioneer's revenue. For example, before introducing a reserve price into the auction, a policy-maker might want to test whether \emph{any} reserve price would yield at least a 5\% increase in the auctioneer's revenue or the median bidder's surplus.\footnote{The auctioneer might be concerned about the bidder's welfare as much as about his own revenue, see, e.g., \cite{andreyanov2023past}. The auctioneer's expected revenue and the median bidder's surplus are typically maximized at a positive reserve price.} 
 
 %This can be useful for policy-makers. 
 %More generally, the maximum, the expectation, or the median can be interesting targets of empirical analysis. 
 %\alert{Tests regarding the maximum, the expectation, or the median of a certain target can be used in empirical analysis.}

To capture a wide variety of data-generating processes, we allow for binding reserve prices, risk aversion in bidders' preferences, and random (i.e., unknown) number of bidders -- a delicate feature often neglected in the literature.
Indeed, many auctions, including procurement auctions, are sealed-bid with bids submitted online, so the participants do not know the number of active bidders.

Our approach relies on the \textit{quantile function of valuations} --- an alternative candidate for constructive identification --- and on the observation that many targets, including revenue, are \emph{continuous linear functionals} of this quantile function.
Since the value quantile function is the key ingredient of our counterfactual evaluation, we provide its complete first-order asymptotic analysis.
Namely, we derive the uniform, asymptotically linear (Bahadur-Kiefer, or BK) expansion for the kernel estimator $\hat v_h$ of the value quantile function $v$, where $h$ is a smoothing bandwidth.
This expansion implies that, despite converging to a Gaussian distribution pointwise, the estimator does not admit a functional central limit theorem, which calls for alternative ways of conducting uniform inference.
Luckily, the linear term of the studentized estimator is \emph{known} and \emph{pivotal}, allowing us to suggest simple simulation-based confidence bands -- a viable alternative to bootstrap -- and establish their validity using the anti-concentration theory of \citet{chernozhukov2014gaussian}.

With the asymptotic theory of the value quantile function at hand, we move towards analyzing the estimators of our targets at counterfactual reserve prices.
We show that these can be divided into two broad classes.
One class contains the ``smoother'' (w.r.t. the value quantile function) counterfactuals that are estimable at the parametric rate $n^{-1/2}$ and converge weakly to a Gaussian process in $\linfzeroone$. The other class contains the ``less smooth'' counterfactuals that are only estimable at the slower rate $(nh)^{-1/2}$ and do not converge weakly in $\linfzeroone$.
For each class, we develop a distinct protocol for constructing confidence intervals and bands and establish their validity.
Under the appropriate choice of bandwidth, the convergence rate of the ``less smooth'' estimator is MSE-optimal. On the other hand, for inference, undersmoothing needs to be used.\footnote{The undersmoothing approach is standard in this literature, see, e.g., Assumption 4.1 in \citet{zincenko2024estimation} and Equation (5.11) in \citet{gimenes2021quantile}.}
% On the other hand, our inference procedures use undersmoothing to eliminate bias.
%When the rate is nonparametric, we can employ the same logic that was used for the analysis of $\hat v_h$. When the rate is parametric, we establish the functional central limit theorem and use it to construct confidence intervals and bands, see \cref{subsec:S-type-func}.

To illustrate our methodology, we use Philip Haile's data on U.S. Forest Service timber auctions, where a reserve price was set to zero, and assess the optimality of this auction design. Namely, we test whether the seller's expected revenue could have increased if the auction designer chose a nonzero reserve price.

While the basic properties of the classical two-step estimator of the density of latent valuations were stated in \cite{guerre2000optimal}, an exhaustive theoretical analysis was completed much later in \cite{ma2019inference}. However, even this analysis did not automatically extend to the functionals of interest. The complete first-order analysis of a single functional -- revenue -- was performed only in \cite{zincenko2024estimation} and required a significant effort. At the same time, a competing, quantile-regression-based approach was developed in \cite{guerre2012uniform}, allowing for a wide range of targets. However, their analysis is missing uniformity over quantiles (i.e., counterfactual reserve prices), which is needed to test the kinds of hypotheses that are the focus of our study. Thus, our work is most similar to \cite{zincenko2024estimation} in spirit, but we also consider targets other than revenue.

More broadly, our work contributes to the expanding literature on quantile methods in first-price auctions, see \cite{marmer2012quantile} and \cite{enache2017quantile} for kernel-based estimators, \cite{luo2018integrated} for isotone regression-based estimators, and \cite{guerre2012uniform} and \cite{gimenes2021quantile} for local polynomial estimators. Interestingly, our estimator of valuation quantiles is a weighted sum of the differences of ordered bids, often referred to as \emph{bid spacings}. The latter has been used for collision detection in \cite{ingraham2005test}, for set identification of bidders' rents in \cite{PAUL2004103} and \cite{marra2020sample}, and in the prior-free clock auction design in \cite{loer2020spacings}.

The rest of the paper is organized as follows.
In \cref{sec:framework}, we set up the theoretical and econometric framework for our analysis.
In \cref{sec:BK} and \cref{sec:func}, we develop estimation and inference procedures for the value quantile function and the counterfactual quantities of interest, respectively.
In \cref{sec:MC}, we provide the Monte Carlo simulations of the finite-sample coverage of our confidence bands.
In \cref{sec:emp}, we use the timber auction data to test whether counterfactual reserve prices increase the auctioneer's revenue. 
\cref{sec:practical} contains a discussion of some practical aspects of our methodology.
\cref{sec:conclude} concludes the paper.
Proofs of theoretical results are provided in the Appendix.

\section{Framework}
\label{sec:framework}

\subsection{Baseline model}

We seek a symmetric Bayes-Nash equilibrium in a first-price auction with $M \ge 2$ ex-ante identical and risk-neutral bidders. Denote the valuation of a bidder by $v$. We impose the following assumption on the value distribution \citep[][Definition 2]{guerre2009nonparametric}.
\begin{assumption}[Distribution of values]\label{ass:distr-bids}
The values $v_1,\dots,v_M$ of bidders are drawn independently from a common CDF $G$ with support $[0, \bar v]$ that is twice continuously differentiable and has a strictly positive density $g(v) = G'(v)$ for all $v \in [0,\bar v]$.
\end{assumption}
The auctioneer announces a reserve price $r^*>0$, which is necessarily binding. Every bidder then submits a sealed bid of $b$. The equilibrium bidding strategy $\beta(v)$ can be characterized by applying the Envelope Theorem to the maximization of $(b-v)G^{M-1}(\beta^{-1}(b|r^*))$:
\begin{equation}\label{eq:envelope}
    \beta(v|r^*) = v - \frac{\int_{r^*}^v G^{M-1}(x)\,dx}{G^{M-1}(v)} = \frac{G^{M-1}(r^*)r^* + \int_{r^*}^v x d G^{M-1}(x)}{G^{M-1}(v)},
\end{equation}
for all $v \geqslant r^* \geqslant 0$ see, e.g., \cite{rileysamuelson} or \cite{krishna2009auction}. 

This strategy is strictly increasing and twice continuously differentiable. Moreover, by the first-order conditions\footnote{$\beta'(v|r^*) = \frac{(v-\beta(v|r^*))g(v)}{G(v)/(M-1)}>0$ for all $v > r^*$ and $\beta'(r^*|r^*) = \frac{1}{1+(G/g)'(r^*)/(M-1)} > 0$ by L'H\^{o}pital's rule.}, it has a strictly positive derivative on $[0,\bar v]$. Denoting by $F$ the CDF of the equilibrium bid and $f=F'$, the inverse bidding strategy can be written as
\begin{equation}
    v = \beta^{-1}(b|r^*) = b + \frac{F(b)}{(M-1)f(b)},
    \label{iden1}
\end{equation}
allowing the recovery of the latent values from the observed bids. This suggests a nonparametric estimation approach popularized by \citet{guerre2000optimal} and  \cite{li2000conditionally}.

Alternatively, we can rewrite the equation \eqref{iden1} in terms of the quantiles of the participating values. Denote by $Q(u) \bydef F^{-1}(u)$ the \emph{bid quantile function} and by $q(u)\bydef Q'(u)$ the associated \emph{bid quantile density}. Let $v(u)\bydef G^{-1}(u)$ be the $u$-th quantile of the participating value distribution $G$. Then equation \eqref{iden1} can be rewritten as
\begin{align}
    v(u) = \beta^{-1}(Q(u)) = Q(u) + \frac{u q(u)}{M-1},
\end{align}
where we use the change of variables $b=Q(u)$ along with the identities $F(Q(u))=u$ and $f(Q(u))q(u) = 1$. Since, by definition, $Q(u) = \beta(G^{-1}(u)|r^*)$, and both $(G^{-1})'(v)$ and $\beta'(v|r^*)$ are strictly positive for all $v \in [r^*, \bar v]$, we arrive at the following property.
%The advantage of this approach was demonstrated, among others, by \cite{marmer2012quantile}, \cite{ma2019inference} and \cite{gimenes2021quantile}.

\begin{prop}
[Distribution of bids]
\label{prop:distr-of-bids}
Under \cref{ass:distr-bids}, the equilibrium bids are drawn independently from a distribution with a twice continuously differentiable quantile function $Q$ such that $q(u) = Q'(u) > 0$ for all $u \in [0,1]$.
\end{prop}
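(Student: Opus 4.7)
The plan is to build $Q$ explicitly as the composition $Q(u) = \beta(v(u))$, where $v(u) = G^{-1}(u)$ is the quantile function of the active value distribution, and then transfer smoothness and positivity of the derivative from $\beta$ and $G^{-1}$ to $Q$ via the chain rule. Independence of bids is automatic: since each bidder follows the pure strategy $\beta$ applied to her own value, the vector of equilibrium bids is a measurable function of independent values, and the active bidders' bids are therefore i.i.d.\ draws from the pushforward of $G$ by $\beta$.

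First I would verify that $v(u) = G^{-1}(u)$ is twice continuously differentiable on $[0,1]$ with $v'(u) = 1/g(v(u)) > 0$. This follows from \cref{ass:distr-bids} applied to the active value distribution: on its support $[\max(\underline v, \underline r), \bar v]$, $G$ inherits from $\tilde G$ two continuous derivatives and a strictly positive density, so the inverse function theorem gives $G^{-1} \in C^2([0,1])$ with $(G^{-1})'(u) = 1/g(G^{-1}(u))$ bounded away from zero.

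Next I would invoke the smoothness of $\beta$ that is already spelled out in the paragraph preceding the proposition: the equilibrium strategy is twice continuously differentiable on $[\underline v, \bar v]$ with $\beta'(v) > 0$ everywhere, including the left endpoint (by the L'H\^opital argument giving $\beta'(\underline v) = 1/(1 + A'(0))$). Composing $\beta$ with $v(\cdot)$ then yields $Q \in C^2([0,1])$, and the chain rule gives
\begin{equation*}
q(u) = Q'(u) = \beta'(v(u))\, v'(u) = \frac{\beta'(v(u))}{g(v(u))} > 0,
\end{equation*}
establishing both conclusions of the proposition.

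The only subtlety, and the one place I would be most careful, is the behaviour at the endpoints $u = 0$ and $u = 1$. At $u = 0$ the chain rule requires $\beta'(\underline v)$ to be finite and positive (guaranteed by the preceding discussion), and at $u = 1$ it requires $g(\bar v) > 0$ (guaranteed by \cref{ass:distr-bids}(3)). Everything else is routine: the formula for $q(u)$ in terms of $\beta'$ and $g$ makes positivity transparent, and two continuous derivatives of the composition follow immediately from two continuous derivatives of each factor.
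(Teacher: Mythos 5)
Your proof is correct and takes essentially the same route as the paper, which likewise writes $Q = \beta \circ G^{-1}$ and transfers twice continuous differentiability and strict positivity of the derivative from $\beta$ and $G^{-1}$ (established in the text immediately preceding the proposition) via the chain rule. Your extra care at the endpoints --- $\beta'(\underline v) = 1/(1+A'(0)) > 0$ from the L'H\^opital argument and $g(\bar v) > 0$ from the assumption --- is exactly the content of the paper's in-text discussion.
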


\subsection{Counterfactuals}\label{sec:counterfactuals}

We show that a variety of counterfactual metrics can be written in terms of the counterfactual reserve price $r^*$ and the distribution $G$ of bids submitted under the \emph{original} reserve price $\underline r$. We then show that, in our model, these counterfactual metrics can be rewritten as \emph{linear} functionals of the value quantile function $v(\cdot)$, the key observation enabling simple inference procedures in \cref{sec:func}.

% We are interested in various metrics, associated with the auction environment, evaluated at counterfactual reserve prices $r^* \geqslant \underline r$. However, the conditional distribution $G$ and the beliefs $p_n$ will be fixed, as if the reserve price is equal to $\underline r$.

One such counterfactual is the total expected (ex ante) surplus. In a symmetric equilibrium, it is ex post equal to the highest valuation if it exceeds $r^*$, and zero otherwise, which is a random variable with CDF $G^M(\cdot)$. Hence the total surplus is its expectation
\begin{equation}
\textit{TS}(r^*) \bydef \int_{r^*}^{\bar v} {x} d G^M(x) .
\end{equation}

Another counterfactual is bidder's expected surplus. By the revenue equivalence principle \citep[see][]{krishna2009auction},
 the interim surplus of a bidder is related to her equilibrium probability of winning, equal to $G^{M-1}(v)$, via the envelope conditions
\begin{equation}
\pi(v|r^*) \bydef \int_{r^{\ast}}^{v} G^{M-1}(x) \, d x.
\end{equation}
To derive the bidder's expected (ex ante) surplus $\textit{BS}$, we need to take the expectation of $\pi(v|r^*)$ w.r.t. the distribution of $v$. Integration by parts yields the formula
\begin{equation*}
\textit{BS}(r^*) \bydef \int_{r^{\ast}}^{\bar v} \pi(x|r^*) d G(x) = \int_{r^{\ast}}^{\bar v}  \int_{r^{\ast}}^v G^{M-1}(x) d x d(G(v)-1) = \int_{r^{\ast}}^{\bar v} (1-G(x))G^{M-1}(x) dx.
\end{equation*}

Finally, we consider the seller's expected revenue under the counterfactual reserve price $r^*$, which is equal to the difference between the total expected surplus and $M$ times the bidder's expected surplus,
\begin{align*}
&\textit{RE}(r^*) \bydef \textit{TS}(r^*) - M \cdot \textit{BS}(r^*) = \int_{r^{\ast}}^{\bar v} x d \left(G^M(x) \right) - M \int_{r^{\ast}}^{\bar v} (1-G(x))G^{M-1}(x) d x = \\
&= M (1-G(r^*))G^{M-1}(r^*)r^{\ast} + \int_{r^{\ast}}^{\bar v} x d \left( G^M(x)+ M (1-G(x))G^{M-1}(x))\right).
\end{align*}

\begin{table}[t!]
\begin{tabularx}{\textwidth}{l|X|X}
\toprule
\toprule
 & Classical (nonlinear) form & Quantile (linear in $v$) form\\
\midrule
\makecell[l]{\textit{TS}$(r^*)$} & $\int_{r^*}^{\bar v} x \, d G^M(x)$ & $M\int_{u^{\ast}}^{1} z^{M-1}\boldred{v(z)} \, dz$\\
\midrule
\makecell[l]{\textit{BS}$(r^*)$} & $\int_{r^*}^{\bar v} (1-G(x))G^{M-1}(x) \, dx$ & \makecell[l]{$-(1-u^*)(u^*)^{M-1}\boldred{v(u^*)}- $ \\ $\int_{u^*}^1 ((M-1)z^{M-2} - M z^{M-1}) \boldred{v(z)}\, dz $}\\
\midrule
\makecell[l]{\textit{RE}$(r^*)$} & \makecell[l]{$M (1-G(r^*))G^{M-1}(r^*)r^{\ast} +$ \\  $\int_{r^*}^{\bar v} x d G^M(x)$ +\\  $M\int_{r^*}^{\bar v} x d \left( (1-G(z))G^{M-1}(z) \right)$} & \makecell[l]{$M (1-u^*)(u^*)^{M-1}\boldred{v(u^*)} +$ \\ $\int_{u^{\ast}}^1 M z^{M-1} \boldred{v(z)}\, dz$ + \\  $M \int_{u^{\ast}}^1 \left( ((M-1)z^{M-2} - mz^{M-1}) \right) \boldred{v(z)}\, dz$}\\
\midrule
\makecell[l]{$\beta(v|r^*)$} & $ \frac{G^{M-1}(r^*)r^*}{G^{M-1}(v)} + \int_{r^*}^v \frac{x}{G^{M-1}(v)} d G^{M-1}(x)$ & $ \frac{(u^*)^{M-1}}{u^{M-1}} \boldred{v(u^*)} + \int_{u^*}^u \frac{(M-1)z^{M-2}}{u^{M-1}} \boldred{v(z)} \,dz $\\
\midrule
\makecell[l]{$\pi(v|r^*)$} & $ \int_{r^*}^v G^{M-1}(x) dx$ & \makecell[l]{$ u^{M-1} \boldred{v(u)} - (u^*)^{M-1} \boldred{v(u^*)} -$\\ $(M-1)\int_{u^{\ast}}^{u} z^{M-2} \boldred{v(z)} dz $}\\
\bottomrule
\end{tabularx}
\caption{Typical counterfactuals in the baseline model.}
\label{table_one}
\end{table}

It can be seen that all the aforementioned counterfactuals are complicated, nonlinear functionals of the primitives. However, using change of variables $z=G(x)$ (i.e., passing to the \emph{ranks} of valuations from their levels) and denoting $u^*=G(r^*)$ yields expressions that are \emph{linear} in the quantile function $v(\cdot)$, see \cref{table_one}. 

Interestingly, since the bid $\beta(v|r^*)$  and the bidder's interim surplus $\pi(v|r^*)$ are monotone in $v$, their median values can be easily computed as
$$v\left(\frac{1+u^*}{2}\right) - \int_{u^*}^{\frac{1+u^*}{2}} \frac{z^{M-1}}{(\frac{1+u^*}{2})^{M-1}} \ dv(z), \quad \int_{u^*}^{\frac{1+u^*}{2}} z^{M-1} d v(z),$$
which are also linear in $v(u)$. This makes $v(u)$ a key object for the counterfactual analysis.

%It can be seen that all of the typical functionals, when written in the quantile form, are linear in $v(\cdot)$, see \cref{table_one}. This makes $v(\cdot)$ a key object needed for the counterfactual analysis.
%requiring asymptotic analysis that we will then use for estimation and inference for the functionals.It can be seen that all of the typical functionals, when written in the quantile form, are linear in $v(u)$, see \cref{table_one}. This makes the value quantile function $v(u)$ a key object needed for the counterfactual analysis.
\subsection{Random number of bidders and binding reserve prices}

Let there be $M \geqslant 2$ \textit{potential bidders}, and $m \leqslant M$ \textit{active bidders} in the auction. A potential bidder becomes active if she passes an exogenous and anonymous selection procedure, such as, for example, an (existing) binding reserve price $\underline r$. We can interpret the active bidders as the ones observed by the econometrician and $\underline r$ as the lower end of the support of bids in the data. Denote the resulting probability of observing $m$ active bidders by $p_m$ so that the expected number of active bidders equals $\tilde M := \sum_{m=1}^M m p_m$. Crucially, bidders do not observe how many opponents they face. We are interested in equilibrium behavior when switching to a (counterfactual) higher reserve price $r^* > \underline r$.\footnote{If the counterfactual reserve price exceeds the valuation of an active bidder, she is still considered active.} 

A subtle difficulty with this approach is that while the likelihood of observing $m$ active bidders, as perceived by the econometrician, is equal to $p_m$, the same likelihood, as perceived by the participant (i.e., conditional on being active), is equal to $\tilde p_{m} := m p_m / \tilde M$. We will refer to $p_m$ and $\tilde p_m$ as the \textit{objective} and \textit{subjective} probabilities. To stay within the rational behavior framework, the bidder's beliefs have to be aligned to the subjective probability.\footnote{Although the equilibrium beliefs $(\tilde p_m)_{m=1}^{M}$ depend on the reserve price $\underline r$ as well as the unspecified selection procedure, its nature is irrelevant as long as the beliefs are identical and do not depend on the identity of the bidder nor his value, see, e.g., \cite{krishna2009auction}.} This would guarantee coherent formulas for the bidders's surplus and the auctioneer's revenue. 

We impose a modified version of \cref{ass:distr-bids}, with the only difference being that the relevant support of $G$ is $[\underline r, \bar v]$ rather than $[0, \bar v]$ and the beliefs of the bidders satisfy $\tilde p_0 + \tilde p_1 \neq 1$. The primitives $\underline r, G,(p_m)_{m=0}^{M}$ of the model are common knowledge. 

The ex-ante total surplus, using objective probabilities, is
$\textit{TS}(r^*) \bydef \int_{r^*}^{\bar v} {x} d \left( A_2(G(x)) \right)$, where $A_2(u) \bydef \sum_{m=1}^{M} p_m u^m$.
To derive the interim bidder's surplus, conditional on being active, using subjective probabilities, denote $A_1(u) \bydef \sum_{m=1}^M \tilde p_{m} u^{m-1}$ and write $\pi(v|r^*) \bydef \int_{r^{\ast}}^{v} A_1(G(x)) \, d x$.\footnote{Note that, with correctly specified beliefs, the coefficients of $A_1(u)$ are proportional to that of $A'_2(u)$.} Taking expectation over the distribution of $v$ and integrating by parts, the ex-ante bidder's surplus, conditional on being active, is $\textit{BS}(r^*) \bydef  \int_{r^{\ast}}^{\bar v}A_3(G(x)) dx$, where $A_3(u) \bydef (1-u) A_1(u)$.

The seller's expected revenue under the counterfactual reserve price is equal to the difference between the total surplus and the bidder's surplus (conditional on being active) times the expected number of active bidders:
\begin{align*}
&\textit{RE}(r^*) \bydef \textit{TS}(r^*) - \tilde M \cdot \textit{BS}(r^*) = \int_{r^{\ast}}^{\bar v} {x} d \left(A_2(G(x)) \right) - \tilde M \int_{r^{\ast}}^{\bar v} A_3(G(x)) d x = \\
&= \tilde M A_3(G(r^{\ast}))r^{\ast} + \int_{r^{\ast}}^{\bar v} {x} d \left( A_2(G(x))+ \tilde M A_3(G(x)))\right).
\end{align*}
The exact same formula can be obtained via revenue equivalence with the second price auction or via expectation of the strategy over the highest value of the active bidder.

% Or the new version
% \begin{gather}
%     A_1(u) \bydef  A_2'(u)/m,  \quad A_2(u) \bydef \sum_{m=1}^{M} p_m u^m,\\
%     A_3(u) \bydef (1-u) A_1(u), \quad  
%     A(u) \bydef \frac{A_1(u)}{A_1'(u)},
% \end{gather}

The equilibrium bidding strategy $\beta(v|r^*)$ with reserve price $r^*$ can be derived, using subjective probabilities, via the envelope conditions:
\begin{gather*}
\beta(v|r^*) = v - \frac{\int_{r^*}^v A_1( G(x))\,dx}{A_1(G(v))} = \frac{A_1( G(r^*))r^* + \int_{r^*}^v x\,dA_1( G(x))}{A_1(G(v))},
\end{gather*}
for all $v \geqslant r^* \geqslant \underline r$. 
This strategy is strictly increasing and twice continuously differentiable. Moreover, it has a strictly positive derivative on $[r^*,\bar v].$\footnote{$\beta'(v|r^*) = \frac{(v-\beta(v|r^*))g(v)}{A(G(v))}$ for all $v > r^*$ and $\beta'(r^*|r^*) = \frac{1}{1 + (A(G)/g)'(r^*)} > 0$, where $A(u) \bydef A_1(u)/A_1'(u)$.} Similar to the baseline model, the inverse bidding strategy can be written either directly or in the quantile form:
\begin{eqnarray}
    v & = & \beta^{-1}(b) = b + \frac{A(F(b))}{f(b)},
    \label{iden1random}\\
    v(u) & = & \beta^{-1}(Q(u)) = Q(u) + A(u) q(u),
    \label{iden2}
\end{eqnarray}
and \cref{prop:distr-of-bids} also holds under the modified version of \cref{ass:distr-bids}.

\begin{table}[t!]
\begin{tabularx}{\textwidth}{l|X|X}
\toprule
\toprule
 & Classical (nonlinear) form & Quantile (linear in $v$) form\\
\midrule
\makecell[l]{\textit{TS}$(r^*)$} & $\int_{r^*}^{\bar v} x \, d \left(A_2(G(x)) \right)$ & $\int_{u^{\ast}}^{1} A_2'(z)\boldred{v(z)} \, dz$\\
\midrule
\makecell[l]{\textit{BS}$(r^*)$} & $ \int_{r^*}^{\bar v} A_3(G(x)) \, dx$ & $- A_3(u^*)\boldred{v(u^*)} - \int_{u^*}^1 A_3'(z) \boldred{v(z)}\, dz $\\
\midrule
\makecell[l]{\textit{RE}$(r^*)$} & \makecell[l]{$\tilde M A_3(G(r^*))r^* +$ \\  $\int_{r^*}^{\bar v} x d \left( A_2(G(x)) + \tilde M A_3(G(x))\right)$} & \makecell[l]{$\tilde M A_3 (u^{\ast})\boldred{v(u^*)} +$ \\ $\int_{u^{\ast}}^1 \left( A_2'(z) + \tilde M A_3'(z) \right) \boldred{v(z)}\, dz$}\\
\midrule
\makecell[l]{$\beta(v|r^*)$} & $ \frac{A_1(G(r^*))r^*}{A_1(G(v))} + \int_{r^*}^v \frac{x}{A_1(G(v))} d A_1(G(x))$ & $ \frac{A_1(u^*)}{A_1(u)} \boldred{v(u^*)} + \int_{u^*}^u \frac{A_1'(z)}{A_1(u)} \boldred{v(z)} \,dz $\\
\midrule
\makecell[l]{$\pi(v|r^*)$} & $ \int_{r^{\ast}}^{v} A_1(G(x)) d x$ & \makecell[l]{$ A_1(u) \boldred{v(u)} - A_1(u^*) \boldred{v(u^*)} -$\\ $\int_{u^{\ast}}^{u} A'_1(z) \boldred{v(z)} dz $}\\
\midrule
\makecell[l]{$p_m(r^*)$} & $\sum_{i = m}^{M} \binom{i}{m} (1-G(r^{*}))^m G(r^{*})^{i-m}p_{i} $ & $\sum_{i = m}^{M}  \binom{i}{m} (1-u^{*})^m(u^{*})^{i-m}p_{i}$\\
\bottomrule
\end{tabularx}
\caption{Typical counterfactuals when the number of bidders is random.}
\label{table_one_random}
\end{table}
Finally, the counterfactual participation pattern using objective probabilities is
$$p_m(r^*) := \sum_{i = m}^{M} \binom{i}{m} (1-G(r^{*}))^m G(r^{*})^{i-m}p_{i}.$$

Using the change of variables $z=G(x)$ and denoting $u^*=G(r^*)$ yields expressions that are \emph{linear} in the quantile function $v(\cdot)$, see \cref{table_one_random}. 

\subsection{Risk aversion and opportunity cost}

Suppose the number of bidders entering an auction is random.
Suppose also that the bidders have constant relative risk aversion (CRRA) utility with risk-aversion parameter $\eta$, and let the auctioneer have the opportunity cost $c$.
The equilibrium bidding strategy $\beta(\cdot|r^*)$ of the active bidder with reserve price $r^*$ can be characterized %either via the first-order conditions 
%\begin{gather*}\label{eq:focrandom}
%\beta'(v|r^*) = \frac{(v-%\beta(v))g(v)}{(1-
%\eta)A(G(v))},
%\end{gather*}
%or 
by applying the Envelope Theorem to the maximization of  $(v - b)A^{\frac{1}{1-\eta}}_1(G(\beta^{-1}(b)))$:
\begin{gather*}\beta(v|r^*) = v - \frac{\int_{r^*}^v A^{\frac{1}{1-\eta}}_1( G(x))\,dx}{A^{\frac{1}{1-\eta}}_1(G(v))} = \frac{A^{\frac{1}{1-\eta}}_1(G(r^*))r^* + \int_{r^*}^v x dA^{\frac{1}{1-\eta}}_1( G(x))}{A^{\frac{1}{1-\eta}}_1(G(v))},
\end{gather*}
for all $v \geqslant r^* > \underline r$. Clearly, the strategy is still linear in $v(\cdot)$.

Due to risk aversion, we cannot derive revenue as in the previous sections, nor can we use the Revenue Equivalence between the first-price and the second-price auction, see \cite{krishna2009auction}.
Instead, similar to \cite{zincenko2024estimation}, we can derive revenue as the expectation of the highest bid net the opportunity cost over the distribution of the highest active bidder's value, $\textit{RE}(r^*) = \int^{\overline v}_{r^*} (\beta(x) - c) d A_2(G(x))$.
Then,
\begin{equation}
\textit{RE}(r^*) = \int^{\overline v}_{r^*} (x-c) d A_2(G(x)) - \int^{\overline v}_{r^*} \int_{r^*}^v A^{\frac{1}{1-\eta}}_1( G(x))\,dx \, dA_4(G(v)),
\end{equation}
where $A_4(x) \bydef \int_0^{x} A^{\frac{-1}{1-\eta}}_1(x)d A_2(x) = \tilde M \int_0^{x} A^{\frac{-\eta}{1-\eta}}_1(x) dx$ because $A'_2(x) = \tilde M A_1(x)$. Finally, using integration by parts similar to the risk-neutral case,
\begin{gather}
\int^{\overline v}_{r^*} \int_{r^*}^v A^{\frac{1}{1-\eta}}_1( G(x))\,dx dA_4(G(v)) 
= - \int^{\overline v}_{r^*} A^{\frac{1}{1-\eta}}_1( G(x)) (A_4(G(x))-A_4(1)) dv= \\
= A^{\frac{1}{1-\eta}}_1( G(r^*)) (A_4(G(r^*))-A_4(1))r^* + \int^{\overline v}_{r^*} x d A^{\frac{1}{1-\eta}}_1( G(x)) A_4(G(x)).
\end{gather}
Therefore, with risk aversion, expected revenue is still linear in $v(\cdot)$.

Finally, although bidder's interim surplus $\pi(v|r^*) = (\int_{r^*}^v A^{\frac{1}{1-\eta}}_1(G(x)) dx)^{1-\eta}$ is not linear in $v(\cdot)$, this problem can be circumvented since tests about the maximum of $\pi(v|r^*)$ over $r^*$ can be constructed from tests about the maximum of $(\pi(v|r^*))^{\frac{1}{1-\eta}}$.

\subsection{Data generating process}\label{sec:dgp}

The observed data is a random sample of bids $\{b_{il},\,\, i=1,\dots,m_l,\,\, l=1,\dots,L \}$, where $b_{il}$ denotes the bid submitted by the $i$-th participant in the $l$-th auction. All the auctions are ex-ante symmetric and independent, and $m_l$ is the number of participants in the $l$-th auction. For brevity, we denote the (random) sample size by $n=n(L) = \sum_{l=1}^L m_l$ and define
\begin{align}
    b_1\bydef b_{11}, \,\, b_2\bydef b_{21}, \ \dots \ ,b_n\bydef b_{m_L L}.
\end{align}
Note that since the bidders do not know the realizations of the number of active bidders, the samples $\{b_1,\dots,b_n\}$ and $\{m_1,\dots,m_L\}$ are independent. Besides, as $L \to \infty$, the sample size $n(L) \to \infty$ with probability one. Therefore, without loss of generality, we condition our subsequent exposition on a realization $\{m_l\}_{l=1}^\infty$ such that $n(L) \to \infty$. This has the following important implication: although the auxiliary functions $A_1,A_2,A_3,A$ and the constant $a$ need to be estimated from the data, we can assume that they are known since their estimators only depend on the conditioning variables $m_1,\dots,m_L$, see equations \eqref{eq:A-1-estimator}-\eqref{eq:phat} below.

\section{Estimation and inference for value quantiles}
\label{sec:BK}

As explained in \cref{sec:counterfactuals}, the value quantile function $v(\cdot)$ is the key object needed for the counterfactual analysis. In this section we develop the asymptotic theory for its natural (plug-in) estimator. To define the estimator, we need to introduce two auxiliary objects.

The first object is the kernel estimator of the bid quantile density $q(u)$, defined by
\begin{align}
\hat q_h(u) &\bydef \int_{0}^{1} K_h(u-z) \, d\hat Q(z), \quad u\in [0,1].
\end{align}
Here $K$ is a compactly supported kernel, $K_h(z) \bydef h^{-1}K\left(h^{-1}z\right)$, $h>0$ is a bandwidth, and $\hat Q(u)$ is the empirical bid quantile function,
\begin{align}\label{eq:Q-est}
    \hat Q(u) =
    \begin{cases}
    b_{(\left\lfloor{nu}\right\rfloor +1)}, \quad u\in [0,1),\\
    b_{(n)}, \quad u=1,
    \end{cases}
\end{align}
where $b_{(1)}\le\dots\le b_{(n)}$ are the order statistics of the observed bids $b_1,\dots,b_n.$ We note that $\hat q_h$ takes the form of a weighted sum of \emph{bid spacings} $b_{(i+1)}-b_{(i)}$,
\begin{equation}\label{eq:q-est}
\hat q_h(u) = \sum_{i=1}^{n-1} K_h(u-i/n) \bigPar{b_{(i+1)}-b_{(i)}}.
\end{equation}
%where we define $b_{(0)}$ to be the leftmost point of the support of the observed bids.If the reserve price (i.e. zero or the known reserve price).
This estimator was previously studied by \citet{siddiqui1960distribution} and \citet{bloch1968simple} for the case of rectangular kernel, and by \citet{falk1986estimation}, \citet{welsh1988asymptotically}, \citet{csorgHo1991estimating}, and \cite{jones1992estimating} for general kernels.

The second auxiliary object is the plug-in estimators of $A_1,A_2,A_3,A$, and $\tilde M$ defined by
\begin{align}
    \check A_1(u) &\bydef \check A'_2(u)/\check M = \sum_{m=1}^{M} \frac{m \check p_m}{\check M} u^{m-1},\quad \check A_2(u) \bydef \sum_{m=1}^{M} \check p_m u^m,\label{eq:A-1-estimator}\\
    \check A_3(u) & \bydef (1-u) \check A_1(u), \quad \check A(u) \bydef  \frac{\check A_1(u)}{\check A_1'(u)}, \quad \check M \bydef \sum_{m=1}^M m \check p_m,
\end{align}
where $\check p_m$ is the empirical frequency of auctions with $m$ bidders,
\begin{align}
    \check p_m \bydef \frac{1}{L}\sum_{l=1}^L 1(m_l=m), \quad m=1,\dots,M. \label{eq:phat}
\end{align}
We use the ``check'' (as opposed to ``hat'') notation here to highlight that $\check A_1, \check A_2, \check A_3, \check A, \check M$ are treated as known since, as explained in \cref{sec:dgp}, our analysis is conditional on $m_1,\dots,m_L$.

Given $\check A$ and $\hat q_h$, we define our estimator of the value quantile $v(u)$ by
\begin{align}
    \hat v_h(u) &\bydef \hat Q(u) + \check A(u) \hat q_h(u), \quad u\in[0,1].\label{eq:def-v-hat}
\end{align}

We note that $\hat v_h$ consists of two parts: (i) the empirical quantile function $\hat Q$ that is uniformly consistent and converges to a Gaussian process in $\linfzeroone$ at the parametric rate $n^{-1/2}$, and (ii) the kernel quantile density $\hat q_h$ that is uniformly consistent only away from the boundary $\{0,1\}$ and does \emph{not} converge to a (tight) limit in $\ell^\infty[\varepsilon,1-\varepsilon]$ even if $\varepsilon>0$, but converges pointwise to a Gaussian limit at the nonparametric rate $(nh)^{-1/2}$, see the proof of \cref{Thm:BK-expansion}.
%We further note that $\hat c$ converges at the parametric rate.
Therefore, the first-order asymptotic properties of $\hat v_h$ are determined by the kernel quantile density $\hat q_h$.

We impose the following assumptions.

% \begin{assumption}[Distribution of bids]\label{ass:distr-bids}
% \text{ }
% \begin{enumerate}
%     \item The observed bids $b_1,\dots,b_n \sim \text{iid}$ with the CDF $F$ supported on a bounded interval $[0,\bar b]$ and admitting  the PDF $f=F'$ that is bounded away from zero and infinity on $[0,\bar b]$. \label{ass:distr-support}
%     \item The quantile density function $q(u)\bydef 1/f(Q(u))$ is Lipschitz on $[0,1]$. \label{ass:q-Lipschitz}
% \end{enumerate}
% \end{assumption}

\begin{assumption}[Kernel function] \label{ass:kernel}
\text{ }
\begin{enumerate}
    \item \label{ass:kernel-basic}$K: \R \to \R$ is a nonnegative function such that
    \begin{align}
        \int_\R K(z)\, dz=1 \text{ and } R_K\bydef \int_{\R} K(z)^2 \, dz < \infty.
    \end{align}
    %\item $K$ is of bounded variation\footnote{The \emph{total variation} of  a function $K:\R \to \R$ is $TV(K)=\sup \sumin \left| K(x_i)-K(x_{i-1})\right|$, where the supremum is taken w.r.t. all the finite partitions $-\infty=x_0<\cdots<x_n=\infty$ of $\R$. Here we use the convention $K(-\infty)=K(\infty)=0$.}, i.e. $TV(K)<\infty$. \label{ass:kernel-BV}
    \item \label{ass:kernel-Lipschitz}$K$ is a Lipschitz function supported on the interval $[-1,1]$. 
\end{enumerate}
\end{assumption}

\begin{assumption}[Bandwidth, estimation] \label{ass:band-large} The bandwidth $h=h_n$ is such that $h\to 0$ and there exist $c>0$ and $\alpha>0$ such that $h_n \ge c n^{-1/2+\alpha}$ for all $n$.
\end{assumption}

\begin{assumption}[Bandwidth, inference] \label{ass:band-small} The bandwidth $h=h_n$ is such that there exist $C>0$ and $\beta>0$ such that $h_n \le C n^{-1/3-\beta}$ for all $n$.
\end{assumption}

%\cref{ass:distr-bids}.\ref{ass:distr-support} allows us to avoid boundary issues; we believe that it can be dispensed with to a certain extent at a cost of introducing the boundary bias correction and the more cumbersome notation.
%\cref{ass:distr-bids}.\ref{ass:q-Lipschitz} is mild and will only be imposed for the asymptotic theory of $S$-type functionals.
\cref{ass:kernel}.\ref{ass:kernel-basic} states that $K$ is a valid, square-integrable PDF.
\cref{ass:kernel}.\ref{ass:kernel-Lipschitz} is standard in the literature on strong approximations of local empirical processes \citep[see, e.g.,][]{rio1994local}.
In particular, it implies that $K$ is a function of bounded variation, which is crucial in our derivation of the BK expansion.
\Cref{ass:band-large} is sufficient if the goal is estimation, because it leads to consistency of $\hat v_h$ and allows for MSE-optimal estimation of $\hat v$, where the optimal bandwidth is $h=O(n^{-1/5})$, see Theorem 2.2 of \citet{csorgHo1991estimating}.
% This assumption also guarantees that the Gaussian approximation to the supremum of our (studentized) estimator has at least the rate $o_p(\log^{-1/2} n)$.
On the other hand, if the goal is inference, then our approach is to impose \cref{ass:band-small} (undersmoothing) to eliminate the bias in $\hat v_h$ and related nonsmooth counterfactuals.

% this rate does not guarantee the validity of the confidence bands in \cref{sec:inference}.
% To this end, \cref{ass:band-small} imposes undersmoothing that eliminates the smoothing bias in $\hat v_h$ and nonsmooth counterfactuals, see  \cref{subsec:T-counterf}.

%A keen observer would notice that the MSE-optimal rate of -1/5 does not belong to the interval [-1/3,-1/2]

\subsection{The Bahadur--Kiefer expansion}\label{sec:BK-for-v}

In this section, we derive the Bahadur--Kiefer (i.e. almost sure, uniform, asymptotically linear) representation of the form
\begin{align}\label{E:BK-v-abstract}
    \frac{\sqrt{nh}(\hat v_h(u)-v(u))}{\hat q_h(u)} = -\check A(u) \eG_{n,h}(u) + R_n(u), \quad u\in[h,1-h],
\end{align}
where
\begin{align}
    \eG_{n,h}(u) \bydef \sqrt{nh} \cdot \frac{1}{n} \sumin \left[K_h\left(u-F(b_i)\right) - \E K_h\left(u-F(b_i)\right) \right] \label{eq:Gnh-def}
\end{align}
and the remainder $R_n(u)$ converges to zero a.s. uniformly in $u\in[h,1-h]$ with an explicit rate.

The key feature of this representation is that the main term is fully \emph{known} and \emph{pivotal}: its distribution does not depend on the data generating process since $U_i \bydef F(b_i) \sim \text{iid Uniform}[0,1]$. Heuristically, this suggests that the distribution of the left-hand side under \emph{any} DGP is a valid approximation for its true distribution. Indeed, in \cref{sec:inference}, we show the validity of such approximation by combining pivotality with the anti-concentration theory of \citet{chernozhukov2014gaussian}. This leads to a simple algorithm for the confidence bands on the quantile function $v(\cdot).$

% This representation serves two purposes.
% First, it directly implies the asymptotic distribution of $\hat v_h(u)$ at any given point $u\in(0,1)$, as well as rates of pointwise and uniform consistency.\footnote{We do not explicitly write out the rate of uniform consistency of our estimator. Our Bahadur-Kiefer expansion implies that this rate is equal to that of the kernel density estimator, which is readily available from the literature \citep[e.g.,][]{silverman1978weak}.}
% Second, it is a first step in establishing a simple procedure for computing uniform confidence bands around $v(\cdot)$.
% The crucial features of the influence function $\varphi_n(b_i;u)$ is that it only depends on $b_i$ through $F(b_i)$, which is a uniform[0,1] random variable; in other words, the distribution of the linear term in \eqref{E:BK-v-abstract} is \emph{pivotal}.
% We use these features along with existing results on strong approximations and anti-concentration for Gaussian processes, to establish validity of confidence bands in the next section.

To derive this representation, we rely on the classical BK expansion of the quantile function \citep{bahadur1966note,kiefer1967bahadur},
\begin{align}\label{E:BK-general}
 \hat Q(u)-Q(u) &= - q(u)\left( \hat F(Q(u)) - u \right) + r_n(u),\\
 \text{where } r_n(u) &= O_{a.s.}\left( n^{-3/4} \ell(n) \right) \text{ uniformly in } u \in [0,1].
\end{align}
Here $\ell(n)=(\log n)^{1/2}(\log \log n)^{1/4}$ is a logarithmic offset factor that arises due to the uniform nature of the approximation and may often be disregarded in practice. Note that the BK expansion represents a \emph{nonlinear} estimator $\hat Q(u)$ as a sum of the \emph{linear} estimator --- the empirical distribution function $\hat F(Q(u))$ --- and the remainder $r_n(u)$ that converges to zero a.s. uniformly at a nonparametric (slow) rate $n^{-3/4}\ell(n)$.

\begin{thm}[Bahadur-Kiefer expansion for value quantiles] \label{Thm:BK-expansion}
Under the Assumptions \ref{ass:distr-bids} and \ref{ass:kernel}, the estimator $\hat v_h(u)$ has the representation
\begin{align}
    Z_n(u) = Z_n^*(u) + R_n(u), \quad u\in[h,1-h], \label{E:BK}
\end{align}
where
\begin{align}
    Z_n(u) &\bydef \frac{\sqrt{nh}\left(\hat v_h(u)-v(u)\right)}{\hat q_h(u)}, \quad
    Z_n^*(u) \bydef -\check A(u) \eG_{n,h}(u), \\
    R_n(u) &= O_{a.s.}\left(n^{1/2}h^{3/2} + h^{1/2} + h^{-1/2}n^{-1/4}\ell(n)\right) \text{ uniformly in } u\in[h,1-h]. \label{E:BK-remainder}
\end{align}
%\begin{align}
%    &\sqrt{nh}\frac{\hat v_h(u)-v_h(u)}{q(u)} = -\frac{1}{\sqrt{nh}} \sumin %\left[K\left(\frac{u-F(b_i)}{h}\right) - \E K\left(\frac{u-F(b_i)}{h}\right) \right] + %R_n(u), \label{E:BK} \\
%    &R_n(u) = O_{a.s.}\left(\sqrt{h}+ h \log h + h^{-1/2}n^{-1/4}\ell(n)\right) \text{ %uniformly in } u\in[0,1].
%\end{align}
\end{thm}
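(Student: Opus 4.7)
The strategy is to reduce everything to the uniform empirical process of $U_i := F(b_i) \sim \text{iid Uniform}[0,1]$ via the classical BK expansion \eqref{E:BK-general}, then use two integrations by parts to rewrite the nonlinear kernel-quantile-density term as the pivotal linear functional $\mathbb G_{n,h}$. Treating $\check A$ as fixed under the conditioning of \cref{sec:dgp}, I begin with the additive decomposition $\hat v_h(u)-v(u) = (\hat Q(u)-Q(u)) + \check A(u)(\hat q_h(u)-q(u))$. The first summand is uniformly $O_{a.s.}(n^{-1/2}\sqrt{\log n})$, and since $\hat q_h$ is uniformly consistent for $q$ bounded below, scaling by $\sqrt{nh}/\hat q_h(u)$ produces an $O_{a.s.}(h^{1/2})$ contribution to $R_n$ --- the middle remainder term.

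For the remaining piece, I would split $\hat q_h(u)-q(u) = \bigl[\int K_h(u-z)q(z)\,dz - q(u)\bigr] + \int K_h(u-z)\,d(\hat Q-Q)(z)$. The first bracket is the smoothing bias; since $Q\in C^2$ makes $q$ Lipschitz, it is $O(h)$ uniformly, producing the $O(n^{1/2}h^{3/2})$ first remainder after scaling. For the stochastic term, the restriction $u\in[h,1-h]$ combined with the $[-1,1]$ support of $K$ makes all kernel-boundary terms at $z=0,1$ vanish, so integration by parts gives $\int K_h(u-z)\,d(\hat Q-Q)(z) = \frac{1}{h^2}\int (\hat Q(z)-Q(z))\,K'((u-z)/h)\,dz$. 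Substituting the classical BK expansion $\hat Q(z)-Q(z) = -q(z)(\hat F_U(z)-z)+r_n(z)$ splits this integral into a main piece and an $r_n$-piece; using $\|r_n\|_\infty = O_{a.s.}(n^{-3/4}\ell(n))$ and $\int|K'|<\infty$, the $r_n$-piece contributes $O_{a.s.}(h^{-1/2}n^{-1/4}\ell(n))$ after scaling --- the third remainder term.

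A second integration by parts on the main piece, using $K'((u-z)/h) = -h\,\partial_z K((u-z)/h)$ and $d(q(\hat F_U-z)) = q'(\hat F_U-z)\,dz + q\,d\hat F_U - q\,dz$, converts it into $-\frac{1}{n}\sum_i K_h(u-U_i)q(U_i) + \int K_h(u-z)q(z)\,dz - \int K_h(u-z)q'(z)(\hat F_U(z)-z)\,dz$. Writing $q(U_i)=q(u)+(q(U_i)-q(u))$ and using the exact identity $\mathbb E K_h(u-U_1) = 1$ for $u\in[h,1-h]$, the dominant piece of $\hat q_h(u)-q(u)$ becomes $-q(u)\bigl[\tfrac{1}{n}\sum_i K_h(u-U_i)-1\bigr] = -q(u)\mathbb G_{n,h}(u)/\sqrt{nh}$; scaling by $\sqrt{nh}\,\check A(u)/\hat q_h(u)$ and using $\hat q_h/q\to 1$ yields the claimed $-\check A(u)\mathbb G_{n,h}(u)$. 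The Taylor remainder $\tfrac{1}{n}\sum_i(q(U_i)-q(u))K_h(u-U_i)$ and the term $\int K_h q'(\hat F_U-z)\,dz$ are both absorbed into the $O_{a.s.}(h^{1/2})$ and $O_{a.s.}(n^{1/2}h^{3/2})$ bounds. The hard part will be upgrading all of the above to the \emph{uniform almost sure} rates of \eqref{E:BK-remainder}; this is where \cref{ass:kernel}.\ref{ass:kernel-Lipschitz} enters crucially --- the Lipschitz and compact-support conditions make $K$ of bounded variation and enable the uniform local-empirical-process oscillation inequalities of \cite{rio1994local} to control each kernel-smoothed-empirical-process remainder uniformly in $u$, when combined with standard LIL/DKW bounds for $\|\hat F_U-z\|_\infty$ and the uniform classical BK remainder in \eqref{E:BK-general}.
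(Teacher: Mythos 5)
Your plan follows essentially the same route as the paper's proof: decompose $\hat v_h(u)-v(u)$ into the $\hat Q-Q$ part (contributing the $h^{1/2}$ remainder after scaling) and the $\check A(u)(\hat q_h-q)$ part, substitute the classical Bahadur--Kiefer expansion of $\hat Q$, and use two integrations by parts (valid because the Lipschitz, compactly supported $K$ has bounded variation) to turn the kernel quantile-density fluctuation into the pivotal term $-\check A(u)\eG_{n,h}(u)$, with the $n^{1/2}h^{3/2}$, $h^{1/2}$, and $h^{-1/2}n^{-1/4}\ell(n)$ remainders arising exactly where you place them. The only differences are cosmetic --- you perform the second integration by parts with $q(z)$ inside the integral and then Taylor-expand $q(U_i)$ about $q(u)$, whereas the paper's \cref{Lem:int-by-parts-kde} first freezes $q$ at $u$ via the Lipschitz bound; and your $\sqrt{\log n}$ factor on $\|\hat Q-Q\|_\infty$ is a slightly more conservative piece of bookkeeping than the paper's --- so the proposal is correct.
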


\begin{rem}[BK expansion for quantile density]
The proof of the preceding theorem also implies the BK expansion for the normalized quantile density $\sqrt{nh}\left(\hat q_h(u)-q_h(u)\right)$, which may be of independent interest. In this case, the right-hand side does not have the factor $\check A(u)$, while the term $h^{1/2}$ in the remainder rate can be replaced by the faster term $h\log h$.
\end{rem}

% \begin{rem}[Dealing with bias]
% The above expansion is centered around $v_h$, the smoothed version of $v$. However, researchers are typically interested in performing inference on $v$ rather than $v_h$. The standard way to eliminate this \emph{smoothing bias} $v_h-v$ is to \emph{undersmooth}, i.e. choose a (suboptimally) small bandwidth  such that $\sqrt{nh}(v_h(u)-v(u)) \to 0$, uniformly over $u\in[0,1]$. Namely, since $q_h(u) - q(u) = O(h)$, we need $\sqrt{nh} \cdot h = o(1)$ or $h = o\bigPar{n^{-1/3}}$ in order to replace $v_h$ with $v$ in the expansion \eqref{E:BK}. Conversely, if the rate of $h$ is larger than $n^{-1/3}$, as in the case of Silverman's rule-of-thumb bandwidth $h^{rot}=O\bigPar{n^{-1/5}}$, the confidence bands will be centered at $v_h$ rather than $v$. This conflict between optimal estimation and uniform inference is a feature of most nonparametric estimators, see \cite{horowitz2001bootstrap} and \cite{hall2013bootstrap}.
% %On the other hand, it follows from \eqref{E:BK-remainder} that the order of $h$ should be strictly larger than $n^{-1/2}$. To sum up, the bandwidth should be chosen in such a way that its rate is $n^{-1/2+\varepsilon}$, where $\varepsilon<1/6$. 
% \end{rem}

We note that two types of biases arise in the estimation of $v(\cdot)$.
The first type of bias is the \emph{smoothing bias} $\E \hat v_h(u)-v(u)$ which manifests in the term $n^{1/2}h^{3/2}$ in the remainder rate. This bias can be eliminated by undersmoothing $h=o(n^{-1/3})$, i.e. choosing a (suboptimally) small bandwidth such that $\sqrt{nh}(\E \hat v_h(u)-v(u)) \to 0$, which is our \Cref{ass:band-small}.
% Conversely, if the rate of $h$ is larger than $n^{-1/3}$, as in the case of Silverman's rule-of-thumb bandwidth $h^{rot}=O\bigPar{n^{-1/5}}$, the confidence bands will be centered at $\E \hat v_h(u)$ rather than $v$.
% This conflict between MSE-optimal estimation and correct inference is a feature of most nonparametric estimators, see \cite{horowitz2001bootstrap} and \cite{hall2013bootstrap}.

The other type of bias is the \emph{boundary bias}, stemming from the estimator $\hat v_h(u)$ being inconsistent when $u$ is close to the boundary $\{0,1\}$ of its domain $[0,1]$. Because our interest is in valid hypothesis testing, and not the confidence bands \emph{per se}, we can eliminate this bias by introducing the trimming $u \in [h,1-h]$ while maintaining the validity of inference procedures based on the representation \eqref{E:BK}.

% \begin{rem}
% Since the kernel $K$ is supported on [-1,1] and $F(b_i)\sim\text{Uniform}[0,1]$, we have $\E K\left(\frac{u-F(b_i)}{h}\right)=1$ on the interval $u\in[h,1-h]$. This fact may be useful when simulating from $Z_n^*$ for the purpose of constructing confidence bands, see \cref{sec:inference}.
% \end{rem}

\subsection{Inference on value quantiles}
\label{sec:inference}

\cref{Thm:BK-expansion} allows us to construct pointwise confidence intervals and uniform confidence bands for the value quantile function. In particular, the following corollary provides the asymptotic distribution of the estimator of a fixed valuation quantile.

\begin{cor}\label{Cor:pointwise-asy}
Under the Assumptions \ref{ass:distr-bids}, \ref{ass:kernel}, \ref{ass:band-large}, and \ref{ass:band-small}, we have, for every $u\in(0,1)$,
\begin{align}
    &\sqrt{nh}\left( \hat v_h(u)-v(u) \right) \weakto N(0, V(u)),\\
    &V(u) \bydef A^2(u)q^2(u)R_K.
\end{align}
\end{cor}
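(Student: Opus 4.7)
The plan is to invoke the Bahadur--Kiefer expansion of \cref{Thm:BK-expansion} and apply Slutsky's theorem after establishing a pointwise central limit theorem for the pivotal term $\mathbb{G}_{n,h}(u)$.

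First, I would verify that the remainder $R_n(u) = O_{a.s.}(n^{1/2}h^{3/2} + h^{1/2} + h^{-1/2}n^{-1/4}\ell(n))$ is $o_{a.s.}(1)$ under \cref{ass:bandwidth}. The first term is bounded above by a constant multiple of $n^{-3\beta/2}$ using $h_n \le Cn^{-1/3-\beta}$; the second term vanishes since $h\to 0$; and the third term is bounded by a constant times $n^{-\alpha/2}\ell(n)$ using $h_n \ge cn^{-1/2+\alpha}$. Rearranging \eqref{E:BK} gives
\begin{equation*}
    \sqrt{nh}\bigl(\hat v_h(u)-v(u)\bigr) = -\hat q_h(u)\check A(u)\,\mathbb{G}_{n,h}(u) + \hat q_h(u)R_n(u).
\end{equation*}

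Next, I would establish a pointwise CLT for $\mathbb{G}_{n,h}(u)$ at a fixed $u\in(0,1)$. Writing $W_i \bydef K_h(u-U_i) - \E K_h(u-U_i)$ with $U_i = F(b_i)$ iid $\mathrm{Uniform}[0,1]$, we have $\mathbb{G}_{n,h}(u) = \sqrt{h/n}\sum_{i=1}^n W_i$. Since $u$ is interior and $h\to 0$, standard kernel calculations give $\E K_h(u-U_1)\to 1$ and $\E K_h(u-U_1)^2 = h^{-1}R_K(1+o(1))$, so $h\,\mathrm{Var}(W_1)\to R_K$. Because $|W_i| \le 2h^{-1}\|K\|_\infty$, Lyapunov's condition with third moments reduces to verifying $n\cdot O(h^{-2})/(n\cdot h^{-1})^{3/2} = O((nh)^{-1/2})\to 0$, which follows from $nh\to\infty$ under \cref{ass:bandwidth}. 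Hence $\mathbb{G}_{n,h}(u) \weakto N(0,R_K)$.

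Finally, I would combine these ingredients via Slutsky's theorem. Consistency of the kernel quantile density estimator at an interior point, $\hat q_h(u)\to q(u)$ in probability (a standard result for the estimator in \eqref{eq:q-est} under \cref{prop:distr-of-bids} and \cref{ass:kernel}, and in fact already implicit in the proof of \cref{Thm:BK-expansion}), together with $\check A(u)\to A(u)$ almost surely by the SLLN applied to $\check p_m$, and $\hat q_h(u)R_n(u) = O_p(1)\cdot o_{a.s.}(1) = o_p(1)$, yields
\begin{equation*}
    \sqrt{nh}\bigl(\hat v_h(u)-v(u)\bigr) \weakto -q(u)A(u)\cdot N(0,R_K) = N\bigl(0, A^2(u)q^2(u)R_K\bigr),
\end{equation*}
which is the claim.

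The proof is almost entirely a bookkeeping exercise given \cref{Thm:BK-expansion}; the only nontrivial input is the Lyapunov CLT for the pivot $\mathbb{G}_{n,h}(u)$, and even that is routine because the summands are bounded uniformly by $O(h^{-1})$ while their variance is of order $h^{-1}$. No real obstacle arises: the hard work has already been absorbed into the Bahadur--Kiefer expansion, whose remainder rate was designed precisely to be $o_p(1)$ on the admissible bandwidth range.
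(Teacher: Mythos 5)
Your proposal is correct and follows exactly the route the paper intends: the corollary is presented as an immediate consequence of \cref{Thm:BK-expansion}, and your write-up (checking that \cref{ass:bandwidth} kills all three remainder terms, a Lyapunov CLT for the pivotal kernel process giving the $N(0,R_K)$ limit, and Slutsky with $\hat q_h(u)\to q(u)$ and $\check A(u)\to A(u)$) is precisely the bookkeeping the authors leave implicit. No gaps.
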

Special cases of this result for the quantile density estimator $\hat q_h$ were derived by \citet{siddiqui1960distribution} and \citet{bloch1968simple}. It implies that a confidence interval of nominal confidence level $1-\alpha$ for $v_h(u)$ can be constructed as
\begin{align}
    \left[ \hat v_h(u) - \frac{\check  A(u)\hat q_h(u) \sqrt{R_K}}{\sqrt{nh}} z_{1-\frac{\alpha}{2}}, \quad \hat v_h(u) + \frac{\check A(u) \hat q_h(u) \sqrt{ R_K}}{\sqrt{nh}} z_{1-\frac{\alpha}{2}} \right],
\end{align}
where $z_{1-\frac{\alpha}{2}}$ is the standard normal quantile of level $1-\frac{\alpha}{2}$.

%\begin{rem}[comparison with inverse kernel density]
%Note that the naive inverse kernel density estimator $1/\hat f_h(\hat{Q}(u))$ has asymptotic variance equal to $R_K %f(Q(u))^{-3}$, while $\hat q_h(u)$ has asymptotic variance equal to $R_K f(Q(u))^{-2}$. Hence our estimator performs %better than the inverse KDE in the regions where density is small.
%\end{rem}

We now turn to the problem of uniform inference on $v(\cdot)$.

Note that if the process $Z_n$ converged weakly in $\linfh$ to a known (or estimable) process $Z$, this would have enabled the construction of asymptotically valid confidence bands by using the quantiles of $\sup_u | Z(u) |$ as critical values.\footnote{For one-sided confidence bands, one would use the quantiles of $\sup_u Z_n(u)$ instead.} Unfortunately, although $Z_n(u)$ is asymptotically Gaussian at each point $u \in (0,1)$, it does \emph{not} converge in $\linfh$. This follows from the fact that the main term $Z_n^*$ in the BK expansion is the scaled kernel density process, which is known to lack functional convergence \citep[see, e.g.,][]{rio1994local}.\footnote{For an example of a sequence of stochastic processes on $[0,1]$ that weakly converges pointwise, but not in $\linfzeroone$, consider $X_n(u)=h_n^{-1/2}\left(B\bigPar{u+h_n}-B\bigPar{u}\right)$, where $B$ is the Brownian motion, $h_n \to 0$ and $u\in(0,1)$. Clearly, $X_n(u) \weakto N(0,1)$ for all $u$, but, by L\'{e}vy's modulus of continuity theorem, $\sup_u \left|X_n(u)\right| \to \infty$ a.s., and so there is no convergence in $\linfzeroone$.}  In such a case, there are two common ways to circumvent the problem and derive valid confidence intervals.

One approach is to derive the asymptotic distribution of a normalized version of $\sup_u Z_n(u)$ using extreme value theory, and then rely on the knowledge of the normalizing constants to construct the confidence band. In the case of kernel and histogram density estimation, this approach was pioneered by \cite{smirnov1950construction} and \cite{bickel1973some}.\footnote{For a nonasymptotic version of Smirnov-Bickel-Rosenblatt extreme value theorem, see \citet[][Theorem 1.2]{rio1994local}.}
However, convergence to the asymptotic distribution turns out to be very slow, leading to the coverage error of the resulting confidence band to be $O(1/\log n)$), as shown by \citet{hall1991convergence}.

The other approach is to rely on finite-sample approximations for (the distribution of) the supremum
\begin{align}
 W_n &\bydef \sup_{u\in [h,1-h]} \left|Z_n(u)\right|.   \label{eq:w-n}
\end{align}
If such an approximation admits simulation, it can be used for the construction of confidence bands. This is the approach we take in this paper.

We consider two types of approximations, both of which are \emph{pivotal}, and hence allow for simulation. One is simply the supremum of the linear term $Z_n^*$, viz.
 \begin{align}
     W_n^* &\bydef \sup_{u\in [h,1-h]} \left|Z_n^*(u)\right|. \label{eq:w-n-star}
 \end{align}
The other is the supremum of $Z_n$ under an alternative, uniform[0,1] distribution of bids, viz.
\begin{align}
    W_n^{U[0,1]} &\bydef \sup_{u\in [h,1-h]} \left|Z_n^{U[0,1]}(u)\right|, \label{eq:w-n-u}
\end{align}
where $Z_n^{U[0,1]}(u)$ is the process $Z_n(u)$ calculated using the pseudo-sample 
\begin{align}
    \{\tilde b_i\}_{i=1}^n \sim \text{iid Uniform}[0,1]. \label{eq:unif-pseudo-bids}
\end{align}
This approximation is nonstandard and makes use of the asymptotic pivotality of $W_n$. In principle, any distribution of the pseudo-bids rationalized by a value distribution satisfying Assumption \ref{ass:distr-bids} can be chosen; however, the uniform distribution is convenient since, in this case, we have, for all $u\in [0,1]$,
\begin{align}
    Q(u) \bydef u, \quad q(u) \bydef 1, \quad v(u) \bydef u + A(u),
\end{align}
and hence
\begin{align}
    Z_n^{U[0,1]}(u) \bydef \sqrt{nh}\left( \hat v_h(u; \{\tilde b_i\}_{i=1}^n)-u - \check A(u)\right).
\end{align}

We emphasize that it is not immediate that the distributions of $W_n^*$ and $W_n^{U[0,1]}$ approximate the distribution of $W_n$ in a way that guarantees the validity of the associated confidence bands
\begin{align}
    &\left[ \hat v_h(u) - \frac{\hat q_h(u)c_{n,1-\alpha/2}}{\sqrt{nh}}, \,\, \hat v_h(u) + \frac{\hat q_h(u) c_{n,1-\alpha/2}}{\sqrt{nh}} \right], \quad u\in [h,1-h], \label{E:conf-band}
\end{align}
where $c_{n,1-\alpha/2}$ is the $(1-\alpha/2)$-quantile of either $W_n^*$ or $W_n^{U[0,1]}$. Indeed, note that \cref{Thm:BK-expansion} implies the inequality
\begin{align}
   \left| W_n - W_n^* \right| = \left| \sup_{u \in [h,1-h]} |Z_n(u)| -\sup_{u \in [h,1-h]} |Z_n^*(u)| \right| \le \sup_{u \in [h,1-h]} |R_n(u)|,
\end{align}
and hence the coupling
\begin{align}\label{E:coupling-abstract}
    W_n = W_n^* + r_n,
\end{align}
where $r_n$ tends to zero a.s. at a known rate. If one could show that this implies Kolmogorov convergence
\begin{align}\label{E:Kolm-convce-abstract}
    \sup_{t\in \R} \left|\Pb(W_n\le t)-\Pb(W_n^*\le t) \right| \to 0,
\end{align}
then the confidence bands based on $W_n^*$ would be valid. However, \eqref{E:Kolm-convce-abstract} need not follow from \eqref{E:coupling-abstract} even if the a.s. convergence rate of $r_n$ is very fast, unless further conditions are imposed on $W_n^*$.

As an illustration of this phenomenon, consider an abstract example $W_n=n^{-1} U$, $W_n^*=n^{-1} (U-1)$, where $U\sim \text{Uniform}[0,1]$. Then $r_n=W_n - W_n^*= n^{-1}$, but
\begin{align*}
    \Pb(W_n\le 0) - \Pb(W_n^*\le 0) = 0-1 = -1 \not\to 0,
\end{align*}
and so \eqref{E:Kolm-convce-abstract} does not hold.
On the other hand, if $W_n^*$ had an absolutely continuous asymptotic distribution $\mathcal{D}$, then the CDF of $W_n$ would converge to the CDF of $\mathcal{D}$ pointwise, and hence the quantiles of $\mathcal{D}$ would serve as valid critical values.

Therefore, intuitively, a certain degree of \emph{anti-concentration} of $W_n^*$ is needed to guarantee that the coupling \eqref{E:coupling-abstract} implies Kolmogorov convergence \eqref{E:Kolm-convce-abstract} and hence validity of simulated critical values. The anti-concentration literature mainly focuses on Gaussian processes, while the process $Z_n^*$ is non-Gaussian. Fortunately, $Z_n^*$ is the normalized kernel density estimator for uniform data, which is a well-studied process. In particular, we rely on the seminal work \citet{chernozhukov2014gaussian} to establish a coupling of $W_n$ with the supremum of a Gaussian process and show that the latter exhibits sufficient anti-concentration. We then argue that an identical argument works for $W_n^*$. Finally, the pivotality of $W_n^*$ and the coupling \eqref{E:coupling-abstract} imply the Kolmogorov convergence for $W_n^{U[0,1]}$. Formally, we have the following result. 

\begin{thm}\label{Thm:uni-conf-bands}
Under the Assumptions \ref{ass:distr-bids}, \ref{ass:kernel}, \ref{ass:band-large}, and \ref{ass:band-small},
\begin{align}
    &\sup_{x \in \R} \left| \Pb(W_n\le x) - \Pb(W_n^*\le x) \right| \to 0,\\
    &\sup_{x \in \R} \left| \Pb(W_n\le x) - \Pb(W_n^{U[0,1]}\le x) \right| \to 0,
\end{align}
and hence the confidence bands \eqref{E:conf-band} are asymptotically valid and exact.
\end{thm}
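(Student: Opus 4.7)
The argument has three ingredients: the almost-sure coupling from the Bahadur--Kiefer expansion, a Gaussian coupling for the pivotal linear term $Z_n^*$, and Gaussian anti-concentration to upgrade these couplings to Kolmogorov convergence.

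First, \cref{Thm:BK-expansion} delivers $|W_n - W_n^*| \le \sup_{u\in[h,1-h]} |R_n(u)| = O_{a.s.}(\rho_n)$ with $\rho_n \bydef n^{1/2}h^{3/2} + h^{1/2} + h^{-1/2}n^{-1/4}\ell(n)$. Under \cref{ass:bandwidth}, a direct computation gives $n^{1/2}h^{3/2} \lesssim n^{-3\beta/2}$, $h^{1/2} \lesssim n^{-1/6-\beta/2}$, and $h^{-1/2}n^{-1/4}\ell(n) \lesssim n^{-\alpha/2}\ell(n)$, whence $\rho_n \sqrt{\log n} \to 0$ a.s. --- this is the margin I will need later.

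Second, since $U_i \bydef F(b_i) \sim \mathrm{iid\ Uniform}[0,1]$ and $\check A$ is (conditionally) deterministic, the process $Z_n^*(u) = -\check A(u)\, \eG_{n,h}(u)$ is a scaled uniform local empirical process. I would apply the Gaussian coupling for empirical processes of \citet{chernozhukov2014gaussian} to construct, on a richer probability space, a centered Gaussian process $G_n^*$ on $[h,1-h]$ with the same covariance as $Z_n^*$ satisfying
\begin{align*}
    \sup_{u\in[h,1-h]} |Z_n^*(u) - G_n^*(u)| = o_p(\log^{-1/2} n).
\end{align*}
The Lipschitz kernel supported on $[-1,1]$ (\cref{ass:kernel}) puts the class $\{K_h(u-\cdot): u\in[h,1-h]\}$ into a VC-type family with controlled bracketing entropy, which is the regime where this coupling holds at the stated rate. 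Combining this with the Gaussian anti-concentration inequality of \citet[][Corollary 2.1]{chernozhukov2014gaussian} applied to $\sup_u |G_n^*(u)|$, the standard sandwich argument --- split $\Pb(W_n \le x)$ according to whether the sum of the two coupling errors exceeds a slowly vanishing threshold $\varepsilon_n$, then balance $\varepsilon_n$ against the $\varepsilon_n\sqrt{\log n}$ anti-concentration bound --- yields
\begin{align*}
    \sup_{x\in\R}|\Pb(W_n\le x) - \Pb(W_n^*\le x)| \to 0,
\end{align*}
which is the first claim.

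Third, $W_n^*$ is pivotal: its law depends only on the uniform sample $\{U_i\}$ and the known $\check A$. Applying \cref{Thm:BK-expansion} to the pseudo-sample $\{\tilde b_i\}\sim \mathrm{iid\ Uniform}[0,1]$ (for which $F$ is the identity and $v(u)=u+A(u)$) yields an analogous coupling $|W_n^{U[0,1]} - \widetilde W_n^*| = O_{a.s.}(\rho_n)$ with $\widetilde W_n^*$ distributionally identical to $W_n^*$. Rerunning the anti-concentration argument with $\widetilde W_n^*$ in place of $W_n^*$ produces $\sup_x|\Pb(W_n^{U[0,1]}\le x)-\Pb(W_n^*\le x)|\to 0$, and the triangle inequality gives the second claim. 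Asymptotic validity and exactness of the confidence band \eqref{E:conf-band} follow immediately, as the quantiles of $W_n^*$ (or $W_n^{U[0,1]}$) converge to those of $W_n$, and the limiting distribution is continuous by the anti-concentration estimate. The principal obstacle is the Gaussian coupling in the second step: since $Z_n^*$ has no weak limit in $\linfh$, one cannot invoke Skorokhod-type arguments and must instead construct a strong approximation at rate $o_p(\log^{-1/2} n)$ --- just fast enough for the $\sqrt{\log n}$-scale anti-concentration bound to neutralize it. The bandwidth restrictions in \cref{ass:bandwidth} are chosen precisely so that both $\rho_n$ and the CCK coupling rate satisfy this requirement simultaneously.
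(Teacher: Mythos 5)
Your proposal is correct and follows essentially the same route as the paper: couple $W_n$ to the pivotal $W_n^*$ via the Bahadur--Kiefer remainder, strongly approximate $W_n^*$ by the supremum of a Gaussian process using the coupling of \citet{chernozhukov2014gaussian} at rate $O_p((nh)^{-1/6}\log n)=o_p(\log^{-1/2}n)$, invoke Gaussian anti-concentration to convert both couplings into Kolmogorov convergence, and dispatch $W_n^{U[0,1]}$ by pivotality and the triangle inequality. The only cosmetic difference is that the paper packages the sandwich argument as explicit extensions of CCK's Lemmas 2.3 and 2.4, whereas you carry it out inline.
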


% \begin{rem}
% Pivotality of the strong approximation suggests an alternative approach to constructing critical values. Namely, note that a quantile of $\sup_{u} Z_n(u)$ under \emph{any} distribution of the bids (satisfying the assumptions of Theorem \ref{Thm:uni-conf-bands}) is a valid critical value. When the empirical distribution of bids is used for this purpose, this constitutes \emph{nonparametric bootstrap}. However, a particularly convenient choice is the uniform[0,1] distribution, since in this case $q(u)=1$ and so $Z_n(u)=\hat q_h(u)-1$. %Sampling from $\sup_u Z_n(u)$ and picking the $(1-\alpha/2)$-quantile yields a desired critical value.
% \end{rem}

%We emphasize that  \cref{Thm:uni-conf-bands} is unusual because $W_n^*$ does not converge in distribution (in fact, it can be shown that $W_n^*$ diverges a.s. at a logarithmic rate).

\begin{rem}
To construct the \emph{one-sided} confidence bands, we note that the same result holds with $W_n$, $W_n^*$, and $W_n^{U[0,1]}$ replaced by $\sup_{u\in[h,1-h]} Z_n(u)$, $\sup_{u\in[h,1-h]} Z_n^*(u)$, and \newline
$\sup_{u\in[h,1-h]} Z_n^{U[0,1]}(u)$, respectively.
\end{rem}

\section{Estimation and inference for counterfactuals}
\label{sec:func}

In this section, we develop the asymptotic theory for the counterfactuals in \cref{table_one} which heavily relies on the analysis of the estimator of value quantiles in the previous section.

Clearly, every such counterfactual has the general form
\begin{align}
    T(u^*) \bydef \varphi(u^*)v(u^*) + \int_{u^*}^1 \psi(x)v(x) \, dx, \label{eq:T-popul}
\end{align}
where $\varphi$ and $\psi$ are continuously differentiable functions on $[0,1]$ that only depend on the auxiliary objects $A_1,A_2,A_3,A$ (or their derivatives). As an example, for the total expected surplus $\varphi(x) \equiv 0$ and $\psi(x) = A_2'(x)$, while for the expected revenue $\varphi(u^*)=\tilde M A_3(u^*)$ and $\psi(x) = A_2'(x)+\tilde M A_3'(x)$.

The representation \eqref{eq:T-popul} implies $T(u^*)$ is a (weighted) sum of two continuous linear functionals of $v$ of different smoothness: (i) evaluation at a point $v(u^*)$ and (ii) integration
\begin{align}
    S_{\psi}(u^*) \bydef \int_{u^*}^1 \psi(x)v(x) \, dx. \label{eq:S-type-func}
\end{align}
The natural estimators of the two components have fundamentally different asymptotic properties. Namely, in \cref{sec:BK-for-v} we showed that the less smooth functional $v(u^*)$ is only estimable at the nonparametric rate $(nh)^{-1/2}$ and does \emph{not} converge in $\ell^\infty[\varepsilon,1-\varepsilon]$ even for $\varepsilon>0$. On the other hand, we will show in \cref{subsec:S-counterf} that the smoother functional $S_{\psi}(u^*)$ is estimable at the parametric rate $n^{-1/2}$ and converges to a Gaussian process in $\linfzeroone$. We will combine the two results in \cref{subsec:T-counterf} to show that, whenever $\varphi \neq 0$, inference on $T$ can be performed similarly to that on $v$.

%=======================================================================================

\subsection{Smooth ($S$-type) counterfactuals}\label{subsec:S-counterf}
First, let us consider estimation and inference for functionals \eqref{eq:S-type-func}, where $\psi:\,[0,1]\to \R$ is a known, continuously differentiable function.

To motivate our estimator, use integration by parts to rewrite
\begin{align}
    S_\psi(u^*) &= \int_{u^*}^{1} \psi(u) Q(u)\,du + \int_{u^*}^{1} \check A(u) \psi(u) \,d  Q(u)\\
    &= \int_{u^*}^{1} \psi(u) Q(u)\,du + \check A(u)\psi(u) Q(u)\vert_{u^*}^{1} -  \int_{u^*}^{1} Q(u)\,d[\check A(u)\psi(u)] =\\
    &= \int_{u^*}^{1} \psi(u)Q(u)\,du + \check A(u)\psi(u) Q(u)\vert_{u^*}^{1} -  \int_{u^*}^{1} Q(u)(\check A'(u)\psi(u) + \check A(u)\psi'(u))\,du  \\
    &= \int_{u^*}^{1} \chi_\psi(u) Q(u)\,du - \check A(u^*)\psi(u^*) Q(u^*) + \check A(1)\psi(1) Q(1),
\end{align}
where we denote
\begin{align}
    \chi_\psi(u) \bydef (1-\check A'(u))\psi(u)-\check A(u)\psi'(u). \label{eq:chi-def}
\end{align}

Note that the latter formula expresses $S_\psi(u^*)$ as a continuous linear functional of the quantile function $Q$, which is estimable at a parametric rate. This leads to a natural estimator of $S_\psi$ that does not contain tuning parameters. Namely, for any $u^*\in [0,1]$, we define the estimator
\begin{align}
    \hat S_\psi(u^*) &\bydef \int_{u^*}^{1} \chi_\psi(u) \hat Q(u)\,du - \check A(u^*)\psi(u^*) \hat Q(u^*) + \check A(1)\psi(1) \hat Q(1) \\
    &= \sum_{i = \lfloor nu^*\rfloor}^{n-1} b_{(i+1)} \int_{\frac{\max(i,nu^*)}{n}}^{\frac{i+1}{n}} \chi_\psi(u) \,du  - \check A(u^*)\psi(u^*) b_{(\lfloor nu^*\rfloor + 1)} + \check A(1)\psi(1) b_{(n)}. \label{eq:S-hat}
\end{align}

The following theorem establishes standard Gaussian process asymptotics for $\hat S_\psi$.

\begin{thm}\label{thm:S-unif-asy-distr}
Under the Assumptions \ref{ass:distr-bids}, \ref{ass:kernel}, \ref{ass:band-large}, and \ref{ass:band-small},
\begin{align}
\sqrt{n}(\hat S_\psi(\cdot) - S_\psi(\cdot)) \weakto \G_{\psi,q}(\cdot) \text{ in } \linfzeroone,
\end{align}
where $\G_{\psi,q}(\cdot)$ is a tight, centered Gaussian process on $[0,1]$ with the covariance function
\begin{align}
\E \G_{\psi,q}(u^*)\G_{\psi,q}(v^*) = \text{Cov}\left( f_{u^*}(U),f_{v^*}(U)\right),\quad U\sim \textit{uniform}[0,1], \quad u^*,v^*\in[0,1],
\end{align}
and
\begin{align}
f_{u^*}(U) \bydef - \int_{u^*}^1 \chi_\psi(u)q(u)1(U\le u)\,du + \check A(u^*)\psi(u^*) q(u^*) 1(U\le u^*).
\end{align}
\end{thm}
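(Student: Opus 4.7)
The plan is to reduce the problem to the weak convergence of an empirical process over a suitably Donsker class of functions, using the Bahadur--Kiefer expansion to replace the quantile process by the uniform empirical process.

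\textbf{Step 1: Linearization in $\hat Q - Q$.} Subtracting the population identity that motivates the estimator from the estimator \eqref{eq:S-hat}, I obtain
\begin{align*}
\hat S_\psi(u^*) - S_\psi(u^*) &= \int_{u^*}^{1} \chi_\psi(u)\bigl(\hat Q(u) - Q(u)\bigr)\,du \\
&\quad - \check A(u^*)\psi(u^*)\bigl(\hat Q(u^*) - Q(u^*)\bigr) + \check A(1)\psi(1)\bigl(\hat Q(1) - Q(1)\bigr),
\end{align*}
which exhibits $\sqrt n(\hat S_\psi - S_\psi)$ as a bounded linear functional of the quantile process $\sqrt n(\hat Q - Q)$, uniformly in $u^*\in[0,1]$.

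\textbf{Step 2: Applying the Bahadur--Kiefer expansion.} Using \eqref{E:BK-general}, for $\alpha_n(u) \bydef \sqrt n(\hat F(Q(u)) - u) = n^{-1/2}\sum_{i=1}^n (1(U_i\le u)-u)$ with $U_i \bydef F(b_i)\sim \textrm{iid Uniform}[0,1]$, I have $\sqrt n(\hat Q(u) - Q(u)) = -q(u)\alpha_n(u) + O_{a.s.}(n^{-1/4}\ell(n))$ uniformly in $u\in[0,1]$. Since $\chi_\psi$, $\check A$, $\psi$, $q$ are all bounded, substituting this into Step 1 produces an $o_{a.s.}(1)$ uniform remainder. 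Moreover $\alpha_n(1) = 0$ almost surely, so the boundary term at $u^* = 1$ contributes only $O_{a.s.}(n^{-1/4}\ell(n))$. Hence, uniformly in $u^*\in[0,1]$,
\begin{align*}
\sqrt n(\hat S_\psi(u^*) - S_\psi(u^*)) &= -\int_{u^*}^1 \chi_\psi(u)q(u)\alpha_n(u)\,du + \check A(u^*)\psi(u^*)q(u^*)\alpha_n(u^*) + o_{a.s.}(1) \\
&= \frac{1}{\sqrt n}\sum_{i=1}^n \bigl[f_{u^*}(U_i) - \E f_{u^*}(U)\bigr] + o_{a.s.}(1),
\end{align*}
where $f_{u^*}$ is the function defined in the theorem statement (the constant centering terms are absorbed once I exchange integration and summation).

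\textbf{Step 3: Donsker property of $\mathcal F = \{f_{u^*}:u^*\in[0,1]\}$.} I would decompose $f_{u^*} = f^{(1)}_{u^*} + f^{(2)}_{u^*}$ where $f^{(1)}_{u^*}(t) = \check A(u^*)\psi(u^*)q(u^*) 1(t\le u^*)$ and $f^{(2)}_{u^*}(t) = -\int_{u^*}^1 \chi_\psi(u)q(u)1(t\le u)\,du$. The first subclass is the product of the continuous, uniformly bounded multiplier $u^*\mapsto \check A(u^*)\psi(u^*)q(u^*)$ with the classical half-line indicator class $\{1(\cdot\le u^*):u^*\in[0,1]\}$ (a standard VC/Donsker class for uniform data); the second subclass has functions uniformly Lipschitz in $u^*$ with a uniformly bounded envelope, so standard bracketing (e.g. using brackets of length $\varepsilon$ on $[0,1]$) yields a bracketing entropy that is logarithmic and hence Donsker. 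Sums of Donsker classes with uniformly bounded envelopes are Donsker, so $\mathcal F$ is Donsker.

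\textbf{Step 4: Conclusion.} The Donsker theorem gives that $n^{-1/2}\sum_i (f_{u^*}(U_i) - \E f_{u^*}(U))$ converges weakly in $\ell^\infty[0,1]$ to a tight, centered Gaussian process whose covariance kernel is exactly $\mathrm{Cov}(f_{u^*}(U),f_{v^*}(U))$; combining with the uniform $o_{a.s.}(1)$ approximation from Step 2 and Slutsky's theorem in $\ell^\infty[0,1]$ yields the claimed weak convergence. The main technical delicacy is the boundary behavior of $\hat Q$ at $u=1$ (since $\hat Q(1) - Q(1) = b_{(n)} - \bar v$ converges faster than $n^{-1/2}$ only through the boundary value $\alpha_n(1)=0$); handling it via the uniform BK expansion rather than pointwise computations is what makes the argument go through cleanly up to the endpoints.
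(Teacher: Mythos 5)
Your proposal is correct and follows essentially the same route as the paper's proof: linearize in $\hat Q - Q$, substitute the classical Bahadur--Kiefer expansion to pass to the uniform empirical process indexed by $f_{u^*}$, and establish Donskerness by splitting $f_{u^*}$ into the indicator-times-multiplier part (VC) and the integral part (Lipschitz in the parameter). The only cosmetic differences are your use of $\alpha_n(1)=0$ to control the boundary term (the paper uses $\hat Q(1)-Q(1)=O_{a.s.}(n^{-1})$ directly) and your bracketing argument for the Lipschitz class, which is the same device the paper invokes via van der Vaart and Wellner.
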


\begin{rem}
The integral in the expression for $\hat S_\psi(u^*)$ can be replaced by $\chi_\psi(\zeta(i))/n$ for any $\zeta(i) \in \left[ \frac{i}{n}, \frac{i+1}{n} \right]$. This would have no impact on the statement of \cref{thm:S-unif-asy-distr}.
\end{rem}

%%%%%%%%%%%%%%%%%%%%%%%%%%%%%%%%%%%%%%%%%%%%%%%%%%%%%%%%%%%%%%%%%%%%%%%%%%%%%%%%%%%%%

\subsection{Nonsmooth ($T$-type) counterfactuals}\label{subsec:T-counterf}

Given the asymptotic results for the two components of the generic counterfactual \eqref{eq:T-popul}, we may now turn to estimation and inference on the latter.

To this end, define the estimator
\begin{align}
    \hat T_h(u^*) = \check \varphi(u^*) \hat v_h(u^*) + \hat S_{\check \psi}(u^*), \quad u^*\in[h,1-h],
\end{align}
where $\hat v_h(u^*)$ is defined in \eqref{eq:def-v-hat} and $\hat S_{\check \psi}(u^*)$ is the estimator \eqref{eq:S-hat} with $\psi=\check \psi$. Since $\hat S_{\check \psi}(u^*)$ converges fast, while $\hat v_h(u^*)$ converges slowly, the asymptotics of $\hat T_h(u^*)$ is dominated by the latter, as illustrated by the following theorem.

\begin{thm}\label{thm:T-expansion}
Under the Assumptions \ref{ass:distr-bids} and \ref{ass:kernel}, we have the representation
\begin{align}
    Z_n^T(u^*) = Z_n^{T*}(u^*) + R_n^T(u^*), \quad u^*\in [h,1-h],  \label{eq:T-expansion}
\end{align}
where
\begin{align}
    Z_n^T(u^*) &\bydef \frac{\sqrt{nh}\left( \hat T_h(u^*)-T(u^*) \right)}{ \hat q_h(u^*)}, \quad
    Z_n^{T*}(u^*) \bydef  -\check\varphi(u^*) \check A(u^*) \eG_{n,h}(u^*),\\
    R_n^T(u^*) &\bydef O_{a.s.}\left(n^{1/2}h^{3/2} + h^{1/2} + h^{-1/2} n^{-1/4} \ell(n)\right) \text{ uniformly in } u^*\in[h,1-h].
\end{align}
\end{thm}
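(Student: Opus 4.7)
The natural strategy is to separate the two pieces of $\hat T_h$ and apply the previously proved results to each. Concretely, since $T(u^*) = \check\varphi(u^*) v(u^*) + S_{\check\psi}(u^*)$, writing
\begin{align*}
\hat T_h(u^*) - T(u^*) = \check\varphi(u^*)\bigl(\hat v_h(u^*) - v(u^*)\bigr) + \bigl(\hat S_{\check\psi}(u^*) - S_{\check\psi}(u^*)\bigr),
\end{align*}
and multiplying by $\sqrt{nh}/\hat q_h(u^*)$ yields
\begin{align*}
Z_n^T(u^*) = \check\varphi(u^*)\cdot\frac{\sqrt{nh}(\hat v_h(u^*) - v(u^*))}{\hat q_h(u^*)} + \frac{\sqrt{nh}(\hat S_{\check\psi}(u^*) - S_{\check\psi}(u^*))}{\hat q_h(u^*)}.
\end{align*}
The first summand is exactly the object controlled by \cref{Thm:BK-expansion}, and the second should be shown to be absorbed into the $h^{1/2}$ piece of the remainder.

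First, I would apply \cref{Thm:BK-expansion} to get $\sqrt{nh}(\hat v_h(u^*)-v(u^*))/\hat q_h(u^*) = -\check A(u^*)\eG_{n,h}(u^*) + R_n(u^*)$ with $R_n(u^*) = O_{a.s.}(n^{1/2}h^{3/2}+h^{1/2}+h^{-1/2}n^{-1/4}\ell(n))$ uniformly in $u^* \in [h,1-h]$. Since $\check\varphi$ is a polynomial in $u^*$ with nonnegative coefficients summing to quantities determined by $(\check p_m)$, it is bounded uniformly on $[0,1]$ (conditional on the beliefs, which we treat as fixed by the discussion in \cref{sec:dgp}). Multiplying through by $\check\varphi(u^*)$ therefore produces the leading term $Z_n^{T*}(u^*) = -\check\varphi(u^*)\check A(u^*)\eG_{n,h}(u^*)$ plus a remainder of the same uniform order.

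Second, I would establish that $\sup_{u^* \in [0,1]} |\hat S_{\check\psi}(u^*) - S_{\check\psi}(u^*)| = O_{a.s.}(n^{-1/2}\ell(n))$. Using the integration-by-parts representation in \cref{subsec:S-counterf}, the difference $\hat S_{\check\psi}(u^*) - S_{\check\psi}(u^*)$ is a bounded linear functional of $\hat Q - Q$. Coupling $\hat Q - Q$ with the classical Bahadur--Kiefer representation \eqref{E:BK-general} and invoking the LIL for the empirical process reduces the bound to $\sup_{u \in [0,1]} |\hat F(Q(u)) - u| = O_{a.s.}(n^{-1/2}(\log\log n)^{1/2})$, which is sufficient. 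Consequently, dividing by $\hat q_h(u^*)$ (bounded away from zero uniformly on $[h,1-h]$ almost surely, as shown inside the proof of \cref{Thm:BK-expansion}) and multiplying by $\sqrt{nh}$ yields an $O_{a.s.}(h^{1/2}\ell(n))$ term, which is dominated by the $h^{1/2}$ contribution in $R_n^T$ up to the slowly varying $\ell(n)$ factor already implicit in \cref{Thm:BK-expansion}.

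Combining the two bounds gives \eqref{eq:T-expansion} with the claimed remainder rate. The main technical obstacle is the second step: \cref{thm:S-unif-asy-distr} provides only $O_p(n^{-1/2})$ in $\ell^{\infty}[0,1]$, whereas the present statement requires an almost sure uniform rate. The simplest fix is to strengthen the proof of \cref{thm:S-unif-asy-distr} by invoking the KMT strong approximation (or the Bahadur--Kiefer LIL) directly on $\hat Q - Q$, which upgrades the $O_p$ bound to an $O_{a.s.}$ one with a logarithmic offset that does not change the final rate. Everything else reduces to bookkeeping: boundedness of $\check\varphi,\check A,\check\psi$ and their derivatives, and the uniform positivity of $\hat q_h$ away from the boundary already exploited in \cref{Thm:BK-expansion}.
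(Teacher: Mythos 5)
Your proposal is correct and follows essentially the same route as the paper: decompose $\hat T_h - T$ into $\check\varphi\,(\hat v_h - v)$ plus $\hat S_{\check\psi}-S_{\check\psi}$, apply \cref{Thm:BK-expansion} to the first piece, absorb the second into the $h^{1/2}$ term after multiplying by $\sqrt{nh}$, and divide by $\hat q_h$ bounded away from zero. Your observation that \cref{thm:S-unif-asy-distr} only delivers an $O_p(n^{-1/2})$ uniform bound while the theorem asserts an $O_{a.s.}$ remainder is a fair catch — the paper's own proof glosses over this by writing $O_p(h^{1/2})$ at that step — and your proposed upgrade via the classical BK expansion plus a LIL-type bound on the Donsker-class empirical process is the natural repair.
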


\cref{thm:T-expansion} immediately yields the following result on the asymptotic distribution of $\hat T_h(u^*)$ at a fixed point $u^* \in (0,1)$.

\begin{cor}
Under the Assumptions \ref{ass:distr-bids}, \ref{ass:kernel}, \ref{ass:band-large}, and \ref{ass:band-small}, we have, for every $u^*\in(0,1),$
\begin{align}
&\sqrt{nh}(\hat T_h(u^*) - T(u^*)) \weakto N(0,V(u^*)),\\
&V(u^*) = \left(A(u^*) q(u^*) \varphi(u^*) \right)^2 R_K.
\end{align}
\end{cor}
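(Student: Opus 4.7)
The corollary is a direct pointwise consequence of the expansion in \cref{thm:T-expansion}. My plan has three short steps.

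First, I would check that the remainder $R_n^T(u^*)$ in \eqref{eq:T-expansion} is $o_{a.s.}(1)$ under \cref{ass:bandwidth}. By \cref{ass:bandwidth}.\ref{ass:band-small}, $n^{1/2}h^{3/2} \le C^{3/2}n^{-3\beta/2} \to 0$; the term $h^{1/2}$ vanishes trivially; and by \cref{ass:bandwidth}.\ref{ass:band-large}, $h^{-1/2}n^{-1/4}\ell(n) = O(n^{-\alpha/2}\ell(n)) \to 0$. Hence $R_n^T(u^*) = o_{a.s.}(1)$, so in particular $Z_n^T(u^*) - Z_n^{T*}(u^*)$ tends to zero in probability.

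Second, I would establish the pointwise CLT $\eG_{n,h}(u^*) \weakto N(0,R_K)$ via the Lindeberg--Feller theorem for iid triangular arrays. Since $U_i \bydef F(b_i) \sim \text{Uniform}[0,1]$ and $K$ is supported on $[-1,1]$, for all $h$ small enough that $[u^*-h,u^*+h] \subset (0,1)$ a change of variables yields $\E K_h(u^*-U_i) = 1$ and $\E K_h(u^*-U_i)^2 = R_K/h$, giving
\begin{align}
\mathrm{Var}(\eG_{n,h}(u^*)) = h\bigPar{R_K/h - 1} \to R_K.
\end{align}
The summands of $\eG_{n,h}(u^*)$ are bounded in absolute value by $2\|K\|_\infty/\sqrt{nh}$, which vanishes since $nh \ge cn^{1/2+\alpha} \to \infty$, so Lindeberg's condition is immediate. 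Because $\check\varphi(u^*)$ and $\check A(u^*)$ are deterministic conditional on $\{m_l\}$ and converge to $\varphi(u^*)$ and $A(u^*)$ by the SLLN applied to $\check p_m$, Slutsky's theorem gives
\begin{align}
Z_n^{T*}(u^*) = -\check\varphi(u^*)\check A(u^*)\eG_{n,h}(u^*) \weakto N\bigPar{0,(\varphi(u^*) A(u^*))^2 R_K}.
\end{align}

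Third, combining the first two steps with \eqref{eq:T-expansion} yields $Z_n^T(u^*) \weakto N(0,(\varphi(u^*) A(u^*))^2 R_K)$. Since $\hat q_h(u^*)$ converges in probability to $q(u^*)$---an immediate by-product of the analysis in \cref{sec:BK-for-v}, as $\hat q_h$ is a standard Siddiqui--Bloch type kernel quantile density estimator---one final application of Slutsky delivers
\begin{align}
\sqrt{nh}\bigPar{\hat T_h(u^*) - T(u^*)} = \hat q_h(u^*)\, Z_n^T(u^*) \weakto N\bigPar{0, V(u^*)},
\end{align}
with $V(u^*) = (A(u^*) q(u^*) \varphi(u^*))^2 R_K$, as claimed. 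All the heavy lifting has already been done in \cref{thm:T-expansion}; the only nontrivial remaining ingredient is the pointwise Lindeberg--Feller CLT for $\eG_{n,h}$, which is a routine triangular-array argument and therefore presents no real obstacle.
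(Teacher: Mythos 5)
Your proof is correct and follows exactly the route the paper intends: the paper states that \cref{thm:T-expansion} ``immediately yields'' the corollary, and your three steps (remainder negligible under \cref{ass:bandwidth}, Lindeberg--Feller for the pivotal kernel process $\eG_{n,h}$ with limiting variance $R_K$, then Slutsky with $\check\varphi(u^*)\check A(u^*)\to\varphi(u^*)A(u^*)$ and $\hat q_h(u^*)\to_p q(u^*)$) are precisely the details being suppressed. Nothing is missing.
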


We now consider uniform inference on $T(\cdot)$. Since the estimator $\hat T_h(\cdot)$ does not converge in $\linfh$, but the approximating process $Z_n^{T*}$ is known and pivotal, the methodology will be similar to the case of the valuation quantile function, see \cref{sec:inference}. In particular, we show that valid confidence bands for $T(\cdot)$ can be based on simulation from either (i) the approximating process $Z_n^{T*}$, or (ii) the process $Z_n^T$ under an alternative distribution of bids. To this end, define
\begin{align}
    W_n^T &\bydef \sup_{u^*\in[h,1-h]}\left| Z_n^T(u^*) \right|,\\
    W_n^{T*} &\bydef \sup_{u^*\in[h,1-h]}\left| Z_n^{T*}(u^*) \right|,\\
    W_n^{T,U[0,1]} &\bydef \sup_{u^*\in[h,1-h]} \left|Z_n^{T,U[0,1]}(u^*)\right|,
\end{align}
where $Z_n^{T,U[0,1]}(u^*)$ is the process $Z_n^T(u^*)$ calculated using the pseudo-sample
\begin{align}
    \{\tilde b_i\}_{i=1}^n \sim \text{iid Uniform}[0,1],
\end{align}
cf. equations \eqref{eq:w-n}-\eqref{eq:unif-pseudo-bids}. Define the confidence bands by
\begin{align}
    \left[ \hat T_h(u^*) - \frac{\hat q_h(u^*) c_{n,1-\alpha/2}}{\sqrt{nh}}, \,  \hat T_h(u^*) + \frac{\hat q_h(u^*) c_{n,1-\alpha/2}}{\sqrt{nh}} \right], \quad u\in[h,1-h], \label{eq:T-conf-bands}
\end{align}
where $c_{n,1-\alpha/2}$ is the $(1-\alpha/2)$-quantile of either $W_n^{T*}$ or $W_n^{T,U[0,1]}$.
\begin{thm}\label{Thm:uni-bands-T}
Under the Assumptions \ref{ass:distr-bids}, \ref{ass:kernel}, \ref{ass:band-large}, and \ref{ass:band-small}, we have 
\begin{align}
&\sup_{t\in \R} \left|\Pb(W_n^T \le t) - \Pb(W_n^{T*} \le t) \right| \to 0,\\
&\sup_{t\in \R} \left|\Pb(W_n^T \le t) - \Pb(W_n^{T,U[0,1]} \le t) \right| \to 0,
\end{align}
and hence the confidence bands \eqref{eq:T-conf-bands} are asymptotically valid and exact.
\end{thm}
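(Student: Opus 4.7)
The proof plan is to mirror the strategy used for \cref{Thm:uni-conf-bands}, exploiting the fact that the expansion in \cref{thm:T-expansion} has exactly the same structural form as the expansion in \cref{Thm:BK-expansion}. The key observation is that the leading term satisfies
\begin{align}
Z_n^{T*}(u^*) = \check\varphi(u^*)\cdot Z_n^{*}(u^*),
\end{align}
so $Z_n^{T*}$ differs from $Z_n^{*}$ only by multiplication by the known, continuously differentiable and uniformly bounded function $\check\varphi$. This multiplicative factor is inherited from a deterministic, smooth object and will not alter the key analytic properties (boundedness of the covariance, Gaussian anti-concentration) established in \cref{sec:inference}.

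The first step is to convert the almost-sure remainder bound in \cref{thm:T-expansion} into the coupling
\begin{align}
|W_n^T - W_n^{T*}| \le \sup_{u^*\in[h,1-h]}|R_n^T(u^*)| = o_{a.s.}\bigl(\log^{-1/2} n\bigr),
\end{align}
where the final rate is obtained from \cref{ass:bandwidth}: the term $n^{1/2}h^{3/2}$ is killed by undersmoothing $h\le Cn^{-1/3-\beta}$, while $h^{-1/2}n^{-1/4}\ell(n)$ is killed by $h\ge cn^{-1/2+\alpha}$ for $\alpha>0$.

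The second and central step is to transfer this coupling into Kolmogorov convergence, which requires anti-concentration of $W_n^{T*}$. Since $\check\varphi\cdot\check A$ is uniformly bounded and continuously differentiable on $[h,1-h]$, the scaled kernel-density-like process $Z_n^{T*} = -\check\varphi\check A\,\eG_{n,h}$ admits a Gaussian strong approximation of the Chernozhukov--Chetverikov--Kato type identical in spirit to the one used for $Z_n^*$ in the proof of \cref{Thm:uni-conf-bands}: there is a centered Gaussian process $G_n$ with the same covariance structure as $Z_n^{T*}$ such that
\begin{align}
\sup_{u^*\in[h,1-h]}\bigl|Z_n^{T*}(u^*)-G_n(u^*)\bigr| = o_{a.s.}\bigl(\log^{-1/2} n\bigr).
\end{align}
Then the anti-concentration inequality for suprema of Gaussian processes (Corollary 2.1 of \citet{chernozhukov2014gaussian}) applied to $\sup|G_n|$, together with the two $o(\log^{-1/2} n)$ couplings above, yields
\begin{align}
\sup_{t\in\R}\bigl|\Pb(W_n^T\le t)-\Pb(W_n^{T*}\le t)\bigr| \to 0,
\end{align}
which is the first claim.

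For the second claim, observe that $Z_n^{T*}$ is pivotal: its finite-dimensional distributions (and hence the distribution of $W_n^{T*}$) depend only on $\check\varphi$, $\check A$, $K$, and the iid $U_i\sim\text{Uniform}[0,1]$, but not on the underlying bid distribution $F$. The construction $W_n^{T,U[0,1]}$ applies the \emph{same} functional $W_n^T$ to a uniform pseudo-sample, and by the first part of the theorem applied to that DGP we have
\begin{align}
\sup_{t\in\R}\bigl|\Pb(W_n^{T,U[0,1]}\le t)-\Pb(W_n^{T*}\le t)\bigr| \to 0,
\end{align}
because the coupling analysis leading to the first claim used only the generic assumptions \ref{ass:distr-bids}, \ref{ass:kernel}, \ref{ass:bandwidth}, all satisfied by the uniform data. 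Combining this with the first claim via the triangle inequality gives the second claim, and validity and exactness of the bands in \eqref{eq:T-conf-bands} follow.

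The main obstacle, as in \cref{Thm:uni-conf-bands}, is establishing the Gaussian strong approximation for the kernel-density-type process $Z_n^{T*}$ with the sharp rate $o(\log^{-1/2} n)$. The presence of the smooth multiplier $\check\varphi(u^*)\check A(u^*)$ only enters through its uniform bound and modulus of continuity and thus does not introduce a fundamentally new difficulty beyond what was handled in \cref{sec:inference}; the bulk of the work is simply invoking the same strong-approximation and anti-concentration machinery in this slightly weighted setting.
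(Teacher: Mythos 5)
Your proposal follows essentially the same route as the paper: couple $W_n^T$ with $W_n^{T*}$ via the remainder bound of \cref{thm:T-expansion}, couple $W_n^{T*}$ with the supremum of a Gaussian process via the Chernozhukov--Chetverikov--Kato machinery with the smooth weight $\check\varphi\check A$ absorbed into the function class, invoke Gaussian anti-concentration, and use pivotality to handle the uniform-pseudo-sample version. The only imprecision is that you state the Gaussian approximation at the process level, $\sup_{u^*}|Z_n^{T*}(u^*)-G_n(u^*)|=o_{a.s.}(\log^{-1/2}n)$, whereas the tool actually invoked (Proposition 3.1 of \citealp{chernozhukov2014gaussian}, applied in the paper with $\xi(u)=u\varphi(u)$) delivers --- and the argument only needs --- a coupling of the \emph{suprema}, $W_n^{T*}=\tilde W_n^T+O_p\bigl((nh)^{-1/6}\log n\bigr)$; the stronger process-level claim is neither provided by that proposition nor necessary.
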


\begin{rem}
For the purpose of constructing the \emph{one-sided} confidence bands, we note that the same result holds with $W_n^T$, $W_n^{T*}$, and $W_n^{T,U[0,1]}$ replaced by $\sup_{u\in[h,1-h]} Z_n^T(u)$, $\sup_{u\in[h,1-h]} Z_n^{T*}(u)$, and $\sup_{u\in[h,1-h]} Z_n^{T,U[0,1]}(u)$, respectively.
\end{rem}

\begin{rem}[Shape of confidence bands]\label{thm:T-expansion-iota}
Note that, for any function $\iota(\cdot)$ bounded away from zero on $[0,1]$, the representation \eqref{eq:T-expansion} is equivalent to 
\begin{align}
    Z_n^T(u^*)/\iota(u^*) = Z_n^{T*}(u^*)/\iota(u^*) + \tilde R_n^T(u^*), \quad u^*\in [h,1-h],  \label{eq:T-expansion-iota}
\end{align}
where $\tilde R_n^T(u^*)$ has the same uniform convergence rate as $R_n^T(u^*)$.
Similarly to \eqref{eq:T-conf-bands}, the two-sided confidence bands based on such representation are
\begin{align}
    \left[ \hat T_h(u^*) - \frac{\iota(u^*) \hat q_h(u^*) c_{n,1-\alpha/2}}{\sqrt{nh}}, \,  \hat T_h(u^*) + \frac{\iota(u^*) \hat q_h(u^*) c_{n,1-\alpha/2}}{\sqrt{nh}} \right], \quad u\in[h,1-h], \label{eq:T-conf-bands-iota}
\end{align}
where $c_{n,1-\alpha/2}$ is the $(1-\alpha/2)$-quantile of either of the random variables
\begin{align}
    W_n^{T,\iota} &\bydef \sup_{u^*\in[h,1-h]} \left| Z_n^{T*}(u^*)/\iota(u^*) \right|, \\
    W_n^{T,U[0,1],\iota} &\bydef \sup_{u^*\in[h,1-h]} \left| Z_n^{T,U[0,1]}(u^*)/\iota(u^*) \right|, 
\end{align}
These bands stay asymptotically exact for any $\iota$, but have the shape $\iota(u^*) \hat q_h(u^*)$, which may affect their finite-sample performance and asymptotic power.\footnote{\citet{montiel2019simultaneous} discuss a similar issue with simultaneous confidence bands for a \emph{vector} (rather than functional) parameter.}
\end{rem}

%While the confidence bands, associated with the right-hand side of \eqref{eq:T-expansion} have constant width, the confidence bands for \eqref{eq:T-expansion-iota} will have the shape of $\iota(\cdot)$. Moreover, these bands will be generally smaller if $\iota(\cdot)$ is set approximately\footnote{Since the value of $\varphi(\cdot) \check A(\cdot)$ is approaching zero near the ends of the support, we can not formally use it for the asymptotic results. In practice, however, these values are irrelevant, due to trimming.} equal to $\varphi(\cdot) \check A(\cdot)$.

% This can be used to reduce the area of the resulting confidence bands if the shape of the variance function of the approximating process can be estimated. This is the exactly the case of \eqref{eq:T-expansion}: the variance of the approximating process is
% \begin{align}
%     \text{Var}\left(Z_n^{T*}(u^*)\right) \propto \check\varphi(u^*)^2 \check A(u^*)^2, \quad u^* \in [h,1-h],
% \end{align}
% where we used the distributional translation invariance of $\eG_{n,h}(\cdot)$ on $[h,1-h]$.

\section{Monte Carlo experiments} \label{sec:MC}

While our theoretical results establish the asymptotic validity of the confidence bands, they do not rule out a substantial finite-sample size distortion. In this section, we evaluate the extent of this distortion in a set of Monte Carlo experiments.

For simplicity, we simulate an auction with exactly two bidders ($M = 2$ and $p_2 = \check p_2 = 1$) and a non-binding (original) reserve price $\underline r = 0$.
We consider three choices for the distributions of the observed bids: uniform, beta and power-law; all of which are supported on the interval $[0,1]$.
The simplest choice is the uniform distribution, since it has a strictly positive density with $q(u) = 1$. The beta$(\alpha, \beta)$ distribution features a bell-shaped density for $\alpha, \beta > 1$, with varying skewness. In contrast, the density $f(x) = \alpha x^{\alpha-1}$ of the power-law distribution increases on the support $[0,1]$ for $\alpha > 1$. We censor these distributions at top 5\% and bottom 5\% quantile levels, so that the quantile density is strictly positive and satisfies the statement of \cref{prop:distr-of-bids}.\footnote{The censoring of the distribution in the Monte Carlo simulations is achieved by replacing their true quantile function $Q(u)$ with $(Q(0.05 + 0.9 u)-Q(0.05))/(Q(0.95)-Q(0.05))$. We emphasize that every simulated bid distribution can be rationalized by some value distribution satisfying \cref{ass:distr-bids}.}

\begin{table}[t!]
    \centering
    \begin{tabular}{l c c c c c}
        Estimand & (i) & (ii) & (iii) & (iv) & (v) \\
        \hline
        & \multicolumn{5}{c}{Sample size = 1,000,  trim = 3\%} \\
        \hline
beta(1,1)   & 0.95  & 0.952 & 0.912 & 0.91  & 0.974 \\
beta(2,2)   & 0.954 & 0.954 & 0.912 & 0.904 & 0.97  \\
beta(5,2)   & 0.952 & 0.954 & 0.924 & 0.916 & 0.966 \\
beta(2,5)   & 0.956 & 0.962 & 0.902 & 0.898 & 0.968 \\
powerlaw(2) & 0.952 & 0.952 & 0.928 & 0.922 & 0.976 \\
powerlaw(3) & 0.948 & 0.948 & 0.93  & 0.926 & 0.978 \\
        \hline
        & \multicolumn{5}{c}{Sample size = 10,000,  trim = 1.5\%} \\
        \hline
beta(1,1)   & 0.95  & 0.948 & 0.932 & 0.936 & 0.96  \\
beta(2,2)   & 0.954 & 0.954 & 0.932 & 0.934 & 0.96  \\
beta(5,2)   & 0.952 & 0.954 & 0.93  & 0.932 & 0.962 \\
beta(2,5)   & 0.952 & 0.952 & 0.918 & 0.93  & 0.958 \\
powerlaw(2) & 0.954 & 0.952 & 0.94  & 0.938 & 0.96  \\
powerlaw(3) & 0.948 & 0.952 & 0.934 & 0.938 & 0.96  \\
        \hline 
        & \multicolumn{5}{c}{Sample size = 100,000,  trim = .7\%} \\
        \hline
beta(1,1)   & 0.95  & 0.948 & 0.938 & 0.942 & 0.954 \\
beta(2,2)   & 0.952 & 0.948 & 0.944 & 0.946 & 0.956 \\
beta(5,2)   & 0.954 & 0.952 & 0.944 & 0.948 & 0.956 \\
beta(2,5)   & 0.956 & 0.952 & 0.932 & 0.948 & 0.954 \\
powerlaw(2) & 0.944 & 0.948 & 0.948 & 0.948 & 0.954 \\
powerlaw(3) & 0.946 & 0.948 & 0.952 & 0.95  & 0.952 \\
    \end{tabular}
    \caption{Simulated coverage of the 95\% uniform confidence bands.}
    \label{tab:coverage}
\end{table}

The estimation targets are (i) the bid quantile function $q$, (ii) the value quantile function $v$; and the following quantities as functions of the counterfactual reserve price: (iii) the potential bidder's expected surplus, (iv) the expected revenue, and (v) the total expected surplus, see \cref{table_one}.

For the non-counterfactual targets (i), (ii) and for the $T$-type counterfactuals (iii), (iv), we calculate the confidence bands by simulation from the left-hand side of the respective BK expansions under the uniform[0,1] bid distribution, see \cref{Thm:uni-bands-T}. For the $S$-type functional (v), the confidence bands are constructed by simulation from the estimated process
\begin{align}
     \hat{\mathcal{G}}(u^*) \bydef \frac{1}{\sqrt{n}} \sum_{i=1}^n \left( \hat f_{u^*}(U_i) - \E \, \hat f_{u^*}(U_i) \right),
\end{align}
where $U_i \sim \text{iid Uniform}[0,1]$ and $\hat f_{u^*}$ is equal to $f_{u^*}$ with the true values $q$ replaced by their estimates $\hat q_h$, see \cref{thm:S-unif-asy-distr}. We use the undersmoothing bandwidth $h = 1.06 s \cdot n^{-0.34}$, where $s$ is the standard deviation of bid spacings, and set both the number of DGP simulations and the number of simulations for the critical values to 500.

The results are shown in \cref{tab:coverage}. The simulated coverage can be seen to be close to the nominal level of $0.95$ for larger sample sizes, which validates our theoretical results in \cref{sec:BK,sec:func}.

\section{Empirical application}
\label{sec:emp}

In this section, we apply our methodology to test the hypothesis about the optimality of the auction design employed in timber sales held by the U.S. Forest Service in between 1974 and 1989, see, e.g., \cite{haile2001auctions}. These auctions did not feature a reserve price (i.e., $\underline r=0$), which raises the question of whether the collected revenue could have been higher had the reserve price been set at a positive level. We use the data kindly provided by Phil Haile on his website.\footnote{\url{http://www.econ.yale.edu/~pah29/}}

We select a subsample of auctions that are sealed-bid and have at least 2 bidders, see \cref{fig:residuals}. As is common in the literature, we residualize the log-bids using available auction-level characteristics: year and location dummies, the logarithms of the tract advertised value and the Herfindahl index.\footnote{The exponentiated log-bid residuals (to which we will refer simply as \emph{bid residuals}) are interpreted as estimates of the idiosyncratic component of bids, while the exponentiated fitted values are interpreted as estimates of the common component of bids, see \cite{haile2003nonparametric}.} The latter is a measure of the homogeneity of the tract with respect to the timber species. This procedure is consistent with a multiplicative model of observed auction heterogeneity.\footnote{The asymptotic distributions of the test statistics are not affected by the error in the residualization procedure, as long as the estimates of the common component of bids converge at a faster (in our case parametric) rate, see \cite{haile2003nonparametric} and \cite{athey2007nonparametric}.} The distribution of bid residuals is truncated at the 5th percentile on each end, leaving 60758 observations, see \cref{fig:residuals}.

\begin{figure}[t!]
\centering
\includegraphics[width = .98 \linewidth]{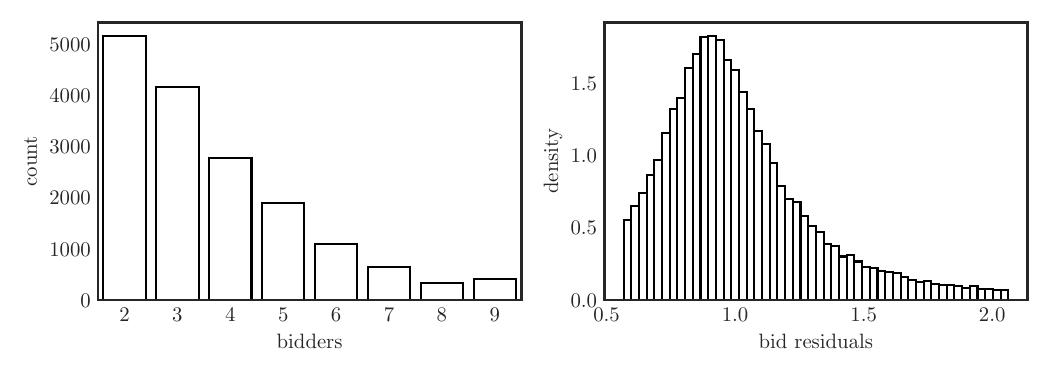}
\caption{Distributions of the number of bidders (left) and of bid residuals (right).}
\label{fig:residuals}
\end{figure}

To assess the change $\Delta(u^{\ast})$ in the expected revenue, we use the quantile version of the revenue formula in  \cref{table_one}, which yields
\begin{align*}
    \Delta(u^{\ast}) &\bydef \textit{RE}(u^*) - \textit{RE}(0) = \tilde M A_3(u^{\ast})v(u^{\ast}) - \int_{0}^{u^{\ast}} \left(A_2'(x) + \tilde M A_3'(x) \right) v(x) \, dx ,
\end{align*}
where $u^{\ast} > 0$ is the counterfactual exclusion level (rank of the counterfactual reserve price in the distribution of valuations) and we used the fact that in our application $v(0) = \underline r$. Since $\Delta(u^*)$ is similar to the $T$-type functional \eqref{eq:T-popul}, its estimator is
\begin{align}
    \hat \Delta_h(u^*) \bydef \varphi(u^*)\hat v_h(u^{\ast}) -\left[ \int_0^{u^*} \chi_\psi(u) \hat Q(u)\,du + \check A(u^*)\psi(u^*) \hat Q(u^*) - \check A(0)\psi(0) \hat Q(0) \right],
\end{align}
where $\varphi(x) \bydef \tilde M \check A_3(x)$, $\psi(x) \bydef A_2'(x) + \tilde M A_3'(x)$ and $\chi_\psi$ is defined in \eqref{eq:chi-def}.

We use the undersmoothing bandwidth $h = 1.06 s \cdot n^{-0.34}$, where $s$ is the standard deviation of spacings of bid residuals, and evaluate $\hat \Delta_h(u^{\ast})$ on the evenly spaced grid $\{i/n\}_{i=0}^n$. This bandwidth is slightly smaller than the Silverman rule of thumb bandwidth $h = 1.06 s \cdot  n^{-1/5}$.
%We then obtain the nonparametric estimator $\hat q_h(u)$ using formulas \eqref{eq:Q-est}-\eqref{eq:def-v-hat}, and a nonparametric estimator $\Delta(u)$ via plug-in method.

To construct the confidence bands, we use the representation in \cref{thm:T-expansion} and \cref{thm:T-expansion-iota}. First, 1000 realizations of the bid quantile density $\hat q_{h}^{U}(\cdot)$ are simulated, independently from the data, based on pseudo-bids from the uniform[0,1] distribution. Second, for a nominal confidence level $(1-\alpha)$, the critical value $c_{n,1-\alpha}$ is computed as the $(1-\alpha)$-quantile of $\sup_{u \in [h,1-h]}(\hat q_{h}^{U}(u) - 1)$. Finally the one-sided confidence band is computed as
\begin{align}
\left(\hat\Delta_h(u) - \tilde M A_3(u) \check A(u) \cdot \hat q_h(u) \cdot  c_{n,1-\alpha}, \,\, +\infty \right), \quad u\in[h,1-h].    
\end{align}

\begin{figure}[t!]
\centering
\includegraphics[width = \linewidth]{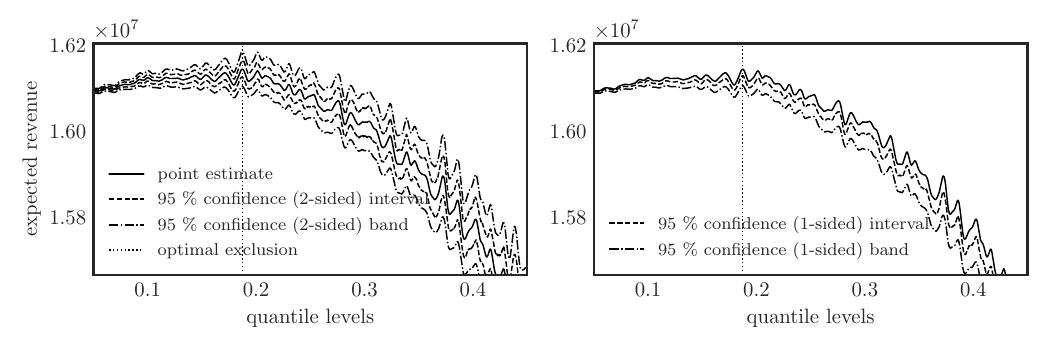}
\caption{Confidence intervals and bands for the counterfactual expected revenue.}
\label{fig:application}
\end{figure}

We test the hypothesis of the nonexistence of a counterfactual (positive) reserve price that would increase the seller's expected revenue. Formally, we test the hypotheses $H_0$ against $H_1$, where
\begin{equation}
    H_0: \sup_{u^{\ast} \in [h,1-h]} \Delta(u^{\ast}) = 0, \quad H_1: \sup_{u^{\ast} \in [h,1-h]} \Delta(u^{\ast}) > 0.
\end{equation}

The corresponding test statistic is the maximal (over the grid) value of the lower end point function of the one-sided confidence band, and $H_0$ is rejected whenever this maximum is positive. We denote by $\tilde u$ the point at which the maximum of the estimated function $\textit{RE}(u^*)$ is attained, i.e. the optimal exclusion level.

%Clearly, this is not the same as testing the hypothesis $H_0(\tilde u)$
%\begin{equation}
%    H_0(u): \Delta(u) < 0, \quad H_1(u):  \Delta(u) > 0.
%\end{equation}
%about the increase in revenue at the fixed exclusion level $u = \tilde u$, where the test statistic would be the %difference between $\Delta(\tilde u)$ and the corresponding one-sided confidence interval.

We test the hypothesis using subsamples of auctions with different numbers of bidders, see \cref{results}. We use subsamples with 2 and 3 bidders, and also with 2-5 (small auctions), 5-9 (large auctions), and 2-9 (all auctions) bidders.\footnote{Typically, a researcher would pick, for the sake of simplicity, a subsample of auctions with the same number of bidders. However, our methodology allows for a random number of bidders, so we can pool auctions with different numbers of bidders together.} Under all specifications, $H_0$ is rejected at a 95\% confidence level, 
see \cref{results}, meaning that the revenue gains at the optimal reserve price are statistically significant, albeit relatively small. We also point out that the sample size can be significantly increased by pooling together the auctions with a different number of bidders. For a full sample, the results of the estimation are illustrated in \cref{fig:application}.

\begin{table}[t!]
    \centering
    \begin{tabular}{l | c c c c c}
        Number of bidders & 2 & 3 & 2-5 & 5-9 & 2-9\\
        \hline
        \hline
sample size                &  10328 & 12477 & 43387 &  26841 &  60758\\
bandwidth relative to $[0,1]$               &  0.01 & 0.009 & 0.006 &  0.007 &  0.006\\
optimal exclusion $\tilde u$                  &  0.1 & 0.12 & 0.12 &  0.18 &  0.19\\
%$H_0(\tilde u)$   &  \text{reject} &  \text{reject} &  \text{reject} &  \text{reject} &  \text{reject}\\
$H_0$ hypothesis   &  \text{reject} &  \text{reject} &  \text{reject} &  \text{reject} & \text{reject}
    \end{tabular}
    \caption{Test results at the 95\% confidence level.}
    \label{results}
\end{table}

%The highest counterfactual revenue is achieved at the exclusion level of approximately 0.12, but the revenue gains are very small.

%A back-of-the-envelope calculation yields approximately 100 U.S. dollars per auction at the time when the auction was held, while the average revenue (when the tract is sold) is measured in millions of U.S dollars.

%If we were to compare the revenue at the counterfactual auction design to, for example, 1.00001 of the default revenue, the situation could emerge when the $H_0$ hypothesis is rejected at a certain quantile level, while $H'_0$ is not. That is due to the nonparametric nature of the estimator of revenue, which makes confidence bands almost twice as large as confidence intervals.

\section{Practical considerations}
\label{sec:practical}

In this section, we briefly discuss some important technical aspects of our methodology. 

\paragraph{Choice of the grid.}
While it is theoretically possible to evaluate our estimators at any quantile level, choosing the evenly-spaced grid $\{i/n\}_{i=2}^n$ has a massive impact on the computational complexity of the estimation procedure and its performance.

Note that, with this grid, the estimate of $\hat q_h(u)$ becomes a discrete convolution of the vector of spacings $\{b_{(i)} - b_{(i-1)}\}_{i=2}^n$ with a discrete filter corresponding to $K_h$. The discrete convolution is a remarkably fast and reliable procedure. Moreover, the counterfactual estimators can be well approximated with the weighted cumulative sums of the vectors of spacings. Consequently, all our estimators can be thought of as combinations of elementary vector operations with sorting and convolution.

% Finally, at the points of the grid that are close to 0 and 1, the estimator of $\hat q_h(u)$ is close 0. The exact number of grid points susceptible to such boundary effect depends on the shape of the kernel, although for the typical compactly supported kernels this number is equal to approximately $[nh]$ on both ends. While it does not affect the estimate $\hat v_h(u)$ far enough from the boundary $\{0,1\}$ of its domain, it may affect the estimates of the counterfactuals in the interior via integration over the problematic regions. 

%To mitigate this problem, we suggest using a simple boundary correction algorithm called \emph{reflection}, see \citet{hickman2015replacing}.

\paragraph{Shape restrictions.}
Since $v(\cdot)$ is a quantile function, one may want to impose monotonicity on $\hat v_h(\cdot)$ and the associated confidence bands. As suggested in \cite{chernozhukov2009improving,rearrangement}, an effective way of doing so is \emph{smooth rearrangement} of the estimate and the confidence bands, whose discrete counterpart is merely a sorting algorithm. We leave the analysis of such shape-restricted estimators for future work. We note that there is emerging literature exploiting shape restrictions for auction counterfactuals, e.g. \citet{henderson2012empirical,luo2018integrated,pinkse2019estimation,ma2021monotonicity}.

% \paragraph{Kernel and bandwidth.}
% Similarly to the case of the kernel density, we recommend using a kernel with a compact support (for example, triangular, triweight or Epanechnikov kernels), with a primary goal of mitigating the boundary effects. It is also possible to also use the rectangular kernel, in which case the estimate will appear ``ragged'' rather than ``smooth''. While our theory does not formally extend to the rectangular kernel, we still find it appealing since in such case the computational time for $\hat v_h$ becomes linear.

% The choice of bandwidth is a more subtle task. While it is possible to use the optimal bandwidth formula, see, for example, \cite{cheng2006bandwidth} and \cite{enache2017quantile}, we find it easier to use the analog of the Silverman rule-of-thumb bandwidth. Denoting by $l^{rot}$ the Silverman bandwidth for the bid density, we can match it with the bandwidth $h^{rot}$ for the bid quantile density as
% \begin{equation}
% h^{rot} := l^{rot} / (b_{(n)} - b_{(0)}), \label{eq:scale-match-up}
% \end{equation} 
% motivated by the \emph{scale match-up} principle of \cite{jones1992estimating}. The idea is that, for the uniform distribution, these two bandwidths would exhibit the same degree of smoothing.

\paragraph{Alternative estimator of $q(u)$.}
A close competitor to our estimator $\hat q_h(\cdot)$ of the bid quantile density is the reciprocal of the kernel estimator $\hat f_l(\cdot)$ of the bid density, as in the first step of the procedure of \citet{guerre2000optimal},
\begin{equation*}
\tilde q_l\left( u \right) = \left(\hat f_l (b_{([un]+1)}) \right)^{-1} = \left( \frac{1}{nl}\sumin K\bigPar{\frac{b_{([un]+1)}-b_i}{l}} \right)^{-1}, \quad u \in [0,1].
\end{equation*}

%To make the comparison fair, it should be evaluated at every bid in the sample,
%thus making it, technically, a plug-in estimate of $f(Q(u))$, from which the reciprocal is %taken afterwards.

An insightful comparison of $\hat q_h$ and $\tilde q_l$ was carried out by \cite{jones1992estimating} who showed that the variance components of the mean squared errors (MSE) of the estimators are equal if so-called \emph{scale match-up} bandwidths are used. Namely, for a fixed point $u \in (0,1)$, the MSE of $\hat q(u)$ is only less than or equal to that of $\tilde q_l(u)$ when $q(u) q''(u) \le 1.5 q'(u)^2$ or, equivalently,
$f(b) f''(b) \ge 1.5 f'(b)^2.$
Therefore, the reciprocal kernel density $\tilde q_l$ performs better close to the center of the distribution, while the kernel quantile density $\hat q_h$ is preferable at the tails.

Finally, we note that the algorithms to construct the estimates of $\tilde q_l$ and $\hat q_h$ on the grid, seem to have different computational complexity, which can be heuristically shown to be $O(n^2)$ and $O(n \log n)$, respectively. This is due to the fact that the convolution algorithm, which the estimator $\hat q_l(u)$ relies on, has the complexity of roughly $O(n \log n)$ due to its usage of the fast Fourier transform.
%The standard kernel density estimator does not allow for such optimization, unless \textit{binning} is applied to discretize the original data. 

%We benchmark $\hat q_h(u)$ against $\hat f_l(\hat Q(u))$ in terms of both IMSE and running time in the Appendix ?.

% \paragraph{Estimation of optimal reserve price.}
% Since $r^*=v(u^*)$ and the value quantile function $v$ is monotone, the optimal reserve price can be written as
% \begin{align}
%     r_{opt} = v\left( \arg \max_{u^*\in(0,1)} \text{Rev}(u^*) \right).
% \end{align}
% Our methodology yields the estimators of the functions $v$ and $\text{Rev}$, and therefore leads to a plug-in estimator of the optimal reserve price,
% \begin{align}
%     \hat r_{opt} = \hat v_h \left( \arg \max_{u^*\in(0,1)} \widehat{\text{Rev}}(u^*) \right).
% \end{align}

\section{Conclusion}
\label{sec:conclude}

In this paper, we develop a novel approach to estimation and inference on counterfactual functionals of interest (such as the expected revenue as a function of the rank of the counterfactual reserve price) in the standard nonparametric model of the first-price sealed-bid auction.
We show that these counterfactuals can be written as continuous linear functionals of the quantile function of bidders' valuations, which can be recovered from the observed bids using a well-known explicit formula.
We suggest natural estimators of the counterfactuals and show that their asymptotic behavior depends on their structure.
In particular, we classify the counterfactuals into two types, one allowing for parametric (fast) convergence rates and standard inference, and the other exhibiting nonparametric (slow) convergence rates and lack of uniform convergence. 
For each of the types of counterfactuals, we develop simple, simulation-based algorithms for constructing pointwise confidence intervals and uniform confidence bands.

In terms of efficiency, our estimators of the ``less smooth'' targets, such as the bidder's surplus and auctioneer's revenue, perform on par with the best that the literature can offer, achieving the optimal rate of estimation while using undersmoothing for inference. However, there may be subtle differences in implementation. Indeed, our approach constructs the estimator on the grid of quantiles simultaneously (rather than separately) for each counterfactual reserve price.
Combined with the Fast Fourier Transform for discrete convolution, this leads to significant computational gains.
The latter can only be seen in massive simulations, however, so the final choice of the estimator should be ultimately based on specific goals and computational resources.

Avenues for further research include boundary correction, data-driven bandwidth selection, shape-restricted estimation of the valuation quantile function and associated functionals, and auction heterogeneity.

\section*{Acknowledgements}
We are grateful to Karun Adusumilli, Tim Armstrong, John Asker, Yuehao Bai, Zheng Fang, Sergio Firpo, Antonio Galvao, Andreas Hagemann, Vitalijs Jascisens, Hiroaki Kaido, Nail Kashaev, Michael Leung, Tong Li, Vadim Marmer, Hyungsik Roger Moon, Hashem Pesaran, Guillaume Pouliot, Geert Ridder, Pedro Sant'Anna, Yuya Sasaki, Matthew Shum, Liyang Sun, Takuya Ura, Quang Vuong, Kaspar W\"{u}thrich, and seminar participants at USC for valuable comments. All errors and omissions are our own.

\newpage  
\part*{Appendix}
\appendix

\section{Estimation and inference for value quantiles, proofs}

\subsection{Proof of \cref{Thm:BK-expansion}}

First, we need the following two lemmas concerning expressions that appear further in the proof.

\begin{lem}\label{Lem:int-by-parts-kqe}
Suppose $K$ is a continuous function of bounded variation. Then
\begin{align}
    \int_0^1 K_h(u-z) \, d\left(\hat Q(z)-Q(z)\right) = - \int_0^1 \left(\hat Q(z)-Q(z)\right)  \, d K_h(u-z) + R^I_n(u),
\end{align}
where $\sup_{u\in [0,1]} |R^I_n(u)| = O_{a.s.}\left(\frac{1}{nh}\right)$.
\end{lem}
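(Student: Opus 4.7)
\emph{Plan.} The plan is to apply the classical integration by parts formula for Riemann--Stieltjes integrals and read off the remainder as the resulting boundary term. The hypotheses of that formula are met: $K_h(u-\cdot)$ inherits continuity and bounded variation from $K$, and $\hat Q - Q$ is of bounded variation on $[0,1]$ (as the difference of a monotone step function and a monotone $C^1$ function). Hence, for every fixed $u$,
\begin{align*}
\int_0^1 K_h(u-z) \, d(\hat Q(z)-Q(z))
&= \bigl[K_h(u-z)(\hat Q(z)-Q(z))\bigr]_{z=0}^{z=1} \\
&\quad - \int_0^1 (\hat Q(z)-Q(z)) \, dK_h(u-z),
\end{align*}
which identifies the remainder as the boundary contribution
\[
R_n^I(u) = K_h(u-1)\bigl(\hat Q(1)-Q(1)\bigr) - K_h(u)\bigl(\hat Q(0)-Q(0)\bigr).
\]

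The remaining task is to bound $R_n^I(u)$ uniformly in $u\in[0,1]$ almost surely. Since $K$ is bounded, $\|K_h\|_\infty \le \|K\|_\infty/h$, so it suffices to control the deviations $\hat Q(1)-Q(1)$ and $\hat Q(0)-Q(0)$ at the endpoints. Writing $b_i = Q(U_i)$ with $U_i$ iid uniform on $[0,1]$, we have $\hat Q(1)-Q(1) = Q(U_{(n)})-Q(1)$ and $\hat Q(0)-Q(0) = Q(U_{(1)})-Q(0)$. Because $q = Q'$ is continuous on $[0,1]$ (by \cref{prop:distr-of-bids}) hence bounded, $Q$ is Lipschitz there, and these deviations are $O(1-U_{(n)})$ and $O(U_{(1)})$ respectively. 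Standard extreme-value bounds for uniform order statistics then give $1-U_{(n)},\,U_{(1)} = O_{a.s.}(1/n)$, so
\[
\sup_{u\in[0,1]} |R_n^I(u)| \;\le\; \frac{\|K\|_\infty}{h}\bigl(|U_{(1)}| + |1-U_{(n)}|\bigr) \cdot \sup_{[0,1]} q \;=\; O_{a.s.}\!\left(\frac{1}{nh}\right),
\]
as claimed.

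The substantive content is the integration-by-parts identity; the boundary-term bound is standard. The only mild subtlety is ensuring the extreme-order-statistic deviations at the endpoints are controlled at rate $1/n$ almost surely, which is a classical fact for uniform samples transported by the Lipschitz function $Q$.
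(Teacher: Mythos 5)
Your proof is correct and follows essentially the same route as the paper's: Riemann--Stieltjes integration by parts, with $R_n^I(u)$ identified as the boundary term and bounded by $h^{-1}\|K\|_\infty$ times the endpoint deviations of $\hat Q$, which are controlled via the quantile transform $b_i = Q(U_i)$. One shared caveat (the paper's proof asserts the same thing): the almost-sure rate of $U_{(1)}$ and $1-U_{(n)}$ for uniform samples is $O(\log\log n/n)$ by Kiefer's iterated-logarithm law for extreme order statistics, not $O(1/n)$ (the latter holds only in probability), but this extra $\log\log n$ factor is immaterial since the resulting $O_{a.s.}(\log\log n/(nh))$ remainder is still dominated by the other terms in \cref{Thm:BK-expansion}.
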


\begin{proof}
Denote $\hat\psi(z)=\hat Q(z)-Q(z)$ and note that $\hat \psi$ is a function of bounded variation a.s. Using integration by parts for the Riemann-Stieltjes integral \citep[see e.g.][Theorem 1.2.7]{stroock1998concise}, we have
\begin{align}
    \int_0^1 K_h(u-z) \, d\hat \psi(z) = - \int_0^1 \hat\psi(z)  \, d K_h(u-z) + K_h(u-1) \hat\psi(1) - K_h(u)\hat\psi(0)
\end{align}
To complete the proof, note that $\hat\psi(1)=b_{(n)}-\bar b = O_{a.s.}(n^{-1})$, $\hat\psi(0)=b_{(1)}-\underline b = O_{a.s.}(n^{-1})$, $|K_h(u-1)|\le h^{-1}K(0)$ and $|K_h(u)|\le h^{-1}K(0)$.
%We emphasize that this is the worst-case bound on $R^I_n$, which is sharp only when $u$ is very close to $0$ or $1$. If $K$ has compact support, then $K_h(u)=K_h(u-1)=0$ if $u$ is outside of $O(h)$-neighborhoods of $0$ and $1$.
%If we instead use the integration limits $[1/(n+1),n/(n+1)]$, as in \cite{csorgHo1991estimating}
\end{proof}

\begin{lem}\label{Lem:int-by-parts-kde}
Suppose $K$ is a continuous function of bounded variation. Then, for every $u \in [0,1]$,
\begin{align}
    &Z_n^*(u) \bydef \sqrt{nh} \int_0^1 (\hat F(Q(z))-z) \, d K_h(u-z) = -\eG_{n,h}(u),\\
    &\eG_{n,h}(u) \bydef \frac{\sqrt{nh}}{n} \sumin \left[K_h(u-F(b_i)) - \E K_h(u-F(b_i)) \right].
\end{align}
\end{lem}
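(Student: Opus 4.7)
The plan is to reduce the integral to a sum of per-observation integrals and evaluate each one explicitly via Riemann--Stieltjes integration by parts. Introducing the uniform random variables $U_i \bydef F(b_i)$, monotonicity of $F$ gives $\hat F(Q(z)) = \frac{1}{n}\sumin \mathbf{1}(U_i \le z)$, and since $U_i \sim \mathrm{Uniform}[0,1]$ we have $\E \mathbf{1}(U_i \le z) = z$. Hence
\begin{align*}
\sqrt{nh}\int_0^1 (\hat F(Q(z))-z)\,dK_h(u-z) = \frac{\sqrt{nh}}{n}\sumin \int_0^1 \left[\mathbf{1}(U_i\le z)-z\right]dK_h(u-z).
\end{align*}

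The next step is to compute each summand. Since $K$ is continuous and of bounded variation, so is $z\mapsto K_h(u-z)$, and the Riemann--Stieltjes integration-by-parts formula applies to both pieces. For the indicator term, I would use that $d\mathbf{1}(U_i\le z)$ is a unit point mass at $U_i$ to obtain
\begin{align*}
\int_0^1 \mathbf{1}(U_i\le z)\,dK_h(u-z) = K_h(u-1)\mathbf{1}(U_i\le 1) - K_h(u-0)\mathbf{1}(U_i\le 0) - K_h(u-U_i) = K_h(u-1) - K_h(u-U_i).
\end{align*}
For the linear term, standard integration by parts yields
\begin{align*}
\int_0^1 z\,dK_h(u-z) = K_h(u-1) - \int_0^1 K_h(u-z)\,dz = K_h(u-1) - \E K_h(u-U_i),
\end{align*}
where the last equality uses $U_i\sim\mathrm{Uniform}[0,1]$. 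Subtracting the two identities, the boundary terms $K_h(u-1)$ cancel exactly, leaving $-\bigl[K_h(u-U_i)-\E K_h(u-U_i)\bigr]$.

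Plugging back, summing over $i$, and multiplying by $\sqrt{nh}/n$ delivers $Z_n^\ast(u) = -\eG_{n,h}(u)$, as claimed. I expect no real obstacle: the only subtlety is the careful handling of the jump of $\mathbf{1}(U_i\le \cdot)$ at $U_i$ in the Stieltjes integral (and the conventions for the endpoints $z=0,1$), which is routine because $K_h(u-\cdot)$ is continuous. Notably, the identity is exact (not asymptotic) and holds for every $u\in[0,1]$, with no undersmoothing or anti-concentration argument required here.
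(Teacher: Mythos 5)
Your proof is correct and follows essentially the same route as the paper's: a Riemann--Stieltjes integration by parts in which the boundary terms vanish (in the paper, because $\hat F(Q(1))-1=0$ and $\hat F(Q(0))=0$ a.s.; in your version, because the $K_h(u-1)$ terms cancel between the indicator and linear pieces), followed by identifying the empirical measure with the sum over $U_i=F(b_i)$. The only cosmetic difference is that you decompose into per-observation integrals before integrating by parts, whereas the paper integrates by parts once and then changes variables $x=Q(z)$.
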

\begin{proof}
Using integration by parts for the Riemann-Stieltjes integral \citep[see e.g.][Theorem 1.2.7]{stroock1998concise}, we have
\small
\begin{align*}
    \int_0^1 (\hat F(Q(z))-z) \, d K_h(u-z) &= -\int_0^1  K_h(u-z) \, d \left[\hat F(Q(z))-z\right] + K_h(u-1) \left[ \hat F(\bar b)-1 \right] + K_h(u)\hat F(0) \\
    &= -\int_0^1  K_h(u-z) \, d \left[\hat F(Q(z))-z\right],
\end{align*}
\normalsize
where we used the fact that $\hat F(\bar b) = 1$ a.s. and $\hat F(0)=0$ a.s. We further write
\begin{align*}
        \int_0^1(\hat F(Q(z))-z) \, d K_h(u-z) &= -\int_0^1 K_h(u-z) \, d \left[\hat F(Q(z))-z\right] \\
        &= -\int_0^{\bar b} K_h(u-F(x)) \, d \left[\hat F(x)-F(x) \right] \\
        &= -\frac{1}{n} \sumin \left[K_h(u-F(b_i)) - \E K_h(u-F(b_i)) \right],
\end{align*}
where in the second equality we used the change of variables $x=Q(z)$.
\end{proof}

%%%%%%%%%%%%%%%%%%%%%%%%%%%%%%%%%%%%%%%%%%%%%%%%%%%%%%%%%%%%%%%%%%%%%%%%%%%%%%%%%%%%%%%%%%%%%%%%%%%%%%%%%%%%%%%%%%

We now proceed with the proof of Theorem \ref{Thm:BK-expansion}.

Plug in the BK expansion \eqref{E:BK-general} and use Lemma \ref{Lem:int-by-parts-kqe} to obtain
\begin{align}
\hat q_h(u) - q_h(u) &= \int_0^1 K_h(u-z) \, d\left[ \hat Q(z)-Q(z)\right] \\
&= \int_0^1 \left[ \hat Q(z)-Q(z)\right] \, d K_h(u - z) + R_n^I(u) \\
&= \int_0^1  q(z) (\hat F(Q(z))-z)\, d K_h(u - z) + \int_0^1  R_n^{BK}(z) \, d K_h(u - z) + R_n^I(u). \label{E:qh-q}
\end{align}

\textbf{First term in \eqref{E:qh-q}}.

Since $f$ is bounded away from zero, $|q'|\le M< \infty$ for some constant $M$, and hence $|q(z)-q(u)|\le M|z-u|$. The first term in \eqref{E:qh-q} can then be rewritten as 
\begin{align}
        \int_0^1q(z) (\hat F(Q(z))-z) \, d K_h(u-z) &= q(u) \int_0^1(\hat F(Q(z))-z) \, d K_h(u-z) + R^{II}_n(u), \label{E:first-term-split}
\end{align}
where
\begin{align}
    \left|R_n^{II}(u)\right| &= \left| \int_0^1(q(z)-q(u))  (\hat F(Q(z))-z)\, d K_h(u - z) \right| \\
    &\le Mh \left| \int_0^1(\hat F(Q(z))-z) \, d K_h(u-z) \right| = Mh \left| (nh)^{-1/2} Z_n^*(u) \right|.
\end{align}
By Lemma \ref{Lem:int-by-parts-kde}, $Z_n^*(u)=-\eG_{n,h}(u)$, where the process $\eG_{n,h}(u) = O_{a.s.}(\log h)$ uniformly in $u \in [0,1]$ \citep[see e.g.][]{silverman1978weak,stute1984oscillation}, and hence
\begin{align}
    R_n^{II}(u) = O_{a.s.}\left(\frac{h\log h}{\sqrt{nh}}\right) \text{ uniformly over } u \in (0,1).
\end{align}
Applying Lemma \ref{Lem:int-by-parts-kde} to the first term in \eqref{E:first-term-split} allows us to rewrite
\begin{align}
    \int_0^1 q(z) (\hat F(Q(z))-z) \, d K_h(u-z) &= -q(u) (nh)^{-1/2} \eG_{n,h}(u) + O_{a.s.}\left(\frac{h\log h}{\sqrt{nh}}\right). \label{E:first-term-rate}
\end{align}

\textbf{Second term in \eqref{E:qh-q}}.

This term can be upper bounded as follows,
\begin{align}
        &\sup_u \left|\int_0^1R_n^{BK}(z) \, d K_h(u-z) \right| \le \sup_u \int_0^1\left|  R_n^{BK}(z) \right| \,\left| d K_h(u-z)\right| \\
        &\le \sup_z |R_n^{BK}(z)| TV(K_h) = O_{a.s.}\left(n^{-3/4}\ell(n)\right) h^{-1}TV(K) =
         O_{a.s.}\left(h^{-1}n^{-3/4}l(n)\right), \label{E:second-term-rate}
\end{align}
where we used the properties of total variation in the first inequality and in the second equality.

Plugging \eqref{E:first-term-rate} and \eqref{E:second-term-rate} into \eqref{E:qh-q} and multiplying by $\sqrt{nh}$ yields
\begin{align}
    \sqrt{nh}\left(\hat q_h(u)-q_h(u) \right) = -q(u) \eG_{n,h}(u) + O_{a.s.}\left(h\log h + h^{-1/2}n^{-1/4}\ell(n)\right), \label{eq:qhat-q}
\end{align}
where we disregarded the term $\sqrt{nh} R_n^I(u)$, since it has the uniform order $O_{a.s.}(n^{-1/2}h^{-1/2})$, which is smaller than  $O_{a.s.}\left(h^{-1/2}n^{-1/4}\ell(n)\right)$.

Note that, for $u \in [h,1-h]$, there exists $\zeta(u,z)$ lying between $u$ and $z$ such that
\begin{align}
    q_h(u) &= \int_0^1 q(z) K_h(u-z) \, dz = \int_0^1 \left( q(u) + q'(\zeta(u,z))(u-z) \right)K_h(u-z)\, dz \\
    &= q(u) + O(h).
\end{align}
Combining this with \eqref{eq:qhat-q} yields
\begin{align}
    &\sqrt{nh}\left(\hat q_h(u)-q(u) \right) = -q(u) \eG_{n,h}(u) + O_{a.s.}\left(n^{1/2} h^{3/2} + h\log h + h^{-1/2}n^{-1/4}\ell(n)\right), \label{eq:qhat-q-infeas}
\end{align}
uniformly in $u \in [h,1-h]$. Using $\eG_{n,h}(u) = O_{a.s.}(\log h) $ again, we conclude that
\begin{align}
    \sqrt{nh}\left(\hat q_h(u)-q(u) \right) = O_{a.s.}\left(\log h + n^{1/2} h^{3/2}\right)
\end{align}
(note that we dropped the terms $h \log h$ and $h^{-1/2}n^{-1/4}\ell(n)$ since they are smaller than $\log h$) or, dividing by $\sqrt{nh}$,
\begin{align}
    \hat q_h(u)-q(u) = O_{a.s.}\left(\frac{\log h}{\sqrt{nh}} + h\right), \text{ uniformly in } u \in [h,1-h]. \label{eq:qh-unif-consistency}
\end{align}

Now we replace $q(u) \eG_{n,h}(u)$ by $\hat q_h(u) \eG_{n,h}(u)$ in \eqref{eq:qhat-q-infeas}, which leads to the approximation error
\begin{align}
    \eG_{n,h}(u) \left( \hat q_h(u)-q(u) \right) = O_{a.s} \left(\frac{(\log h)^2}{\sqrt{nh}} + h\log h\right), \text{ uniformly in } u \in [h,1-h],
\end{align}
as follows from \eqref{eq:qh-unif-consistency}.
Hence, \eqref{eq:qhat-q-infeas} becomes
\begin{align}
    \sqrt{nh}\left(\hat q_h(u)-q(u) \right) = - \hat q_h(u) \eG_{n,h}(u) +  O_{a.s.}\left(n^{1/2}h^{3/2} + h\log h + h^{-1/2}n^{-1/4}\ell(n)\right), \label{eq:qhat-q-feas}
\end{align}
uniformly in $u \in [h,1-h]$, where we dropped the term $\frac{(\log h)^2}{\sqrt{nh}}$ since it is smaller than $h^{-1/2}n^{-1/4}\ell(n)$.

Finally, write
\begin{align}
\sqrt{nh}\left(\hat v_h(u)-v(u)\right) &= \sqrt{nh} \left(\hat Q(u)-Q(u)\right) + \check A(u)\sqrt{nh} \left(\hat q_h(u)-q(u) \right).\label{eq:vhat-v}
\end{align}
Since $\hat Q(u)-Q(u)=O_{a.s.}(n^{-1/2})$ uniformly in $u\in(0,1)$, we have
\begin{align}
    \sqrt{nh}\left(\hat v_h(u)-v(u)\right) = O_{a.s.}(h^{1/2}) + \check A(u) \sqrt{nh}\left(\hat q_h(u)-q(u) \right) .
\end{align}
Combining this with \eqref{eq:qhat-q-feas} and noting that $h\log h$ is smaller than $h^{1/2}$ yields
\begin{align}
    \sqrt{nh}\left(\hat v_h(u)-v(u)\right) = - \check A(u) \hat q_h(u) \eG_{n,h}(u) +  O_{a.s.}\left(n^{1/2}h^{3/2} + h^{1/2} + h^{-1/2}n^{-1/4}\ell(n)\right),
\end{align}
uniformly in $u\in[h,1-h]$. Dividing by $\hat q_h(u)$, which is bounded away from zero w.p.a. 1, completes the proof. $\qed$

\subsection{Proof of \cref{Thm:uni-conf-bands}}

A key ingredient of the proof is to note that Lemmas 2.3 and 2.4 of \citet{chernozhukov2014gaussian} continue to hold even if their random variable $Z_n$ does not have the form $Z_n=\sup_{f\in \mathcal{F}_n} \eG_nf$ for the standard empirical process $\eG_n$, but instead is a generic random variable admitting a strong sup-Gaussian approximation with a sufficiently small remainder.

For completeness, we provide the aforementioned trivial extensions of the two lemmas here.

Let $X$ be a random variable with distribution $P$ taking values in a measurable space $(S,\mathcal{S})$. Let $\F$ be a class of real-valued functions on $S$. We say that a function $F: S \to \R$ is an \emph{envelope} of $\F$ if $F$ is measurable and $|f(x)|\le F(x)$ for all $f \in \F$ and $x \in S$.

We impose the following assumptions (A1)-(A3) of \citet{chernozhukov2014gaussian}.
\begin{enumerate}
    \item[(A1)] The class $\F$ is \emph{pointwise measurable}, i.e. it contains a coutable subset $\G$ such that for every $f\in \F$ there exists a sequence $g_m \in \G$ with $g_m(x) \to f(x)$ for every $x \in S$.
    \item[(A2)] For some $q\ge 2$, an envelope $F$ of $\F$ satisfies $F \in L^q(P)$.
    \item[(A3)] The class $\F$ is $P$-pre-Gaussian, i.e. there exists a tight Gaussian random variable $G_P$ in $l^\infty(\F)$ with mean zero and covariance function
    \begin{align*}
        \E[G_P(f)G_P(g)] = \E[f(X)g(X)] \text{ for all } f,g\in\F.
    \end{align*}
\end{enumerate}

\begin{lem}[A trivial extension of Lemma 2.3 of \citet{chernozhukov2014gaussian}] \label{Lem:CCK-2-3}
Suppose that Assumptions (A1)-(A3) are satisfied and that there exist constants $\underline \sigma$, $\bar \sigma>0$ such that $\underline\sigma^2 \le Pf^2 \le \bar\sigma^2$ for all $f\in\F$. Moreover, suppose there exist constants $r_1,r_2>0$ and a random variable $\tilde Z=\sup_{f\in \F} G_Pf$ such that $\Pb(|Z-\tilde Z|>  r_1)\le r_2$. Then
\begin{align*}
    \sup_{t\in\R}\left|\Pb(Z\le t)-\Pb(\tilde Z\le t)\right| \le C_{\sigma} r_1\left\{ \E \tilde Z + \sqrt{1 \vee \log(\underline\sigma/r_1)} \right\} + r_2,
\end{align*}
where $C_\sigma$ is a constant depending only on $\underline\sigma$ and $\bar\sigma$.
\end{lem}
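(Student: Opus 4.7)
The plan is to reduce the statement to the anti-concentration inequality for the supremum of a Gaussian process. The key observation is that the original Lemma 2.3 of \citet{chernozhukov2014gaussian} does not really use the structure of $Z=\sup_{f\in\F_n}\mathbb{G}_n f$; its proof only uses (i) a probabilistic coupling of $Z$ with $\tilde Z$ and (ii) anti-concentration of $\tilde Z$. Since both ingredients are available under our hypotheses, the proof carries over verbatim.

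First, for any $t \in \mathbb{R}$, the coupling $\Pb(|Z-\tilde Z|>r_1)\le r_2$ gives the sandwich
\begin{align*}
    \Pb(Z\le t) &\le \Pb(\tilde Z\le t+r_1)+r_2, \\
    \Pb(Z\le t) &\ge \Pb(\tilde Z\le t-r_1)-r_2.
\end{align*}
Consequently,
\begin{align*}
    \bigl|\Pb(Z\le t)-\Pb(\tilde Z\le t)\bigr| \le \Pb\bigl(\tilde Z\in[t-r_1,t+r_1]\bigr)+r_2.
\end{align*}

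Next, the main step is to bound the first term uniformly in $t$ by invoking the Chernozhukov-Chetverikov-Kato anti-concentration inequality for suprema of separable Gaussian processes (Corollary 2.1 / Theorem 2.1 in \citet{chernozhukov2014gaussian}). Under (A1)-(A3) and $\underline\sigma^2\le Pf^2\le\bar\sigma^2$ for all $f\in\F$, this inequality yields
\begin{align*}
    \sup_{t\in\mathbb{R}}\Pb\bigl(\tilde Z\in[t-r_1,t+r_1]\bigr) \le C_\sigma\, r_1\Bigl\{\E\tilde Z+\sqrt{1\vee\log(\underline\sigma/r_1)}\Bigr\},
\end{align*}
where $C_\sigma$ depends only on $\underline\sigma,\bar\sigma$. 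Combining this with the sandwich bound above and taking the supremum over $t$ delivers the claimed inequality.

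There is no genuine obstacle: the only conceptual point is to verify that the CCK proof indeed does not exploit the empirical-process form of $Z$. Once this is noted, the argument is purely a two-line manipulation plus a black-box application of their anti-concentration inequality, whose hypotheses are guaranteed by (A1)-(A3) together with the uniform variance bounds $\underline\sigma^2\le Pf^2\le\bar\sigma^2$.
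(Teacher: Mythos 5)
Your proposal is correct and follows essentially the same route as the paper's proof: use the coupling event to sandwich $\Pb(Z\le t)$ between shifted CDFs of $\tilde Z$, then control the resulting $r_1$-shift via the Chernozhukov--Chetverikov--Kato anti-concentration inequality for the supremum of the Gaussian process (the paper invokes their Lemma A.1, which is the same bound you use in concentration-function form). The only cosmetic difference is that you consolidate the two one-sided bounds into a single bound on $\Pb\bigl(\tilde Z\in[t-r_1,t+r_1]\bigr)$, which is equivalent up to the constant $C_\sigma$.
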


\begin{proof}
For every $t\in \R$, we have
\begin{align*}
    \Pb(Z\le t) &= \Pb(\{Z\le t\} \cap \{|Z-\tilde Z|\le r_1\}) + \Pb(\{Z\le t\} \cap \{|Z-\tilde Z|>r_1\})\\
    &\le \Pb(\tilde Z\le t+r_1)+r_2\\
    &\le \Pb(\tilde Z\le t) + C_{\sigma} r_1\left\{ \E \tilde Z + \sqrt{1 \vee \log(\underline\sigma/r_1)} \right\} + r_2,
\end{align*}
where Lemma A.1 of \citet{chernozhukov2014gaussian} (an anti-concentration inequality for $\tilde Z$) is used to deduce the last inequality. A similar argument leads to the reverse inequality, which completes the proof.
\end{proof}

\begin{lem}[A trivial extension of Lemma 2.4 of \citet{chernozhukov2014gaussian}] \label{Lem:CCK-2-4}
Suppose that there exists a sequence of $P$-centered classes $\F_n$ of measurable functions $S \to \R$ satisfying assumptions (A1)-(A3) with $\F=\F_n$ for each $n$, where in the assumption (A3) the constants $\underline\sigma$ and $\bar\sigma$ do not depend on $n$. Denote by $B_n$ the Brownian bridge on $\ell^\infty(\F_n)$, i.e. a tight Gaussian random variable in $\ell^\infty(\F_n)$ with mean zero and covariance function
\begin{align*}
    \E[B_n(f)B_n(g)] = \E[f(X)g(X)] \text{ for all } f,g\in\F_n.
\end{align*}
Moreover, suppose that there exists a sequence of random variables $\tilde Z_n = \sup_{f\in\F_n} B_n(f)$ and a sequence of constants $r_n\to 0$ such that $|Z_n-\tilde Z_n|=O_P(r_n)$ and $r_n \E \tilde Z_n \to 0$. Then 
\begin{align*}
    \sup_{t\in \R} \left|\Pb(Z_n\le t)-\Pb(\tilde Z_n\le t)\right| \to 0.
\end{align*}
\end{lem}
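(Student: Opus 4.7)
The plan is to reduce to the previously established extension of CCK Lemma 2.3 by choosing, for each $n$, a suitable deterministic pair $(r_1, r_2)$ that quantifies the stochastic approximation $|Z_n - \tilde Z_n| = O_P(r_n)$, and then to pass to the limit in the resulting inequality.

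First, I would translate the $O_P$ statement into a usable quantitative form: for any fixed $\epsilon > 0$, by the definition of $O_P(r_n)$, there exists a constant $M_\epsilon < \infty$ and an index $N_\epsilon$ such that $\Pb(|Z_n - \tilde Z_n| > M_\epsilon r_n) \le \epsilon$ for every $n \ge N_\epsilon$. Next, I would apply the preceding Lemma (the extension of CCK Lemma 2.3) separately for each such $n$, with $\F = \F_n$, $Z = Z_n$, $\tilde Z = \tilde Z_n$, $r_1 = M_\epsilon r_n$, and $r_2 = \epsilon$. The crucial observation is that, since assumption (A3) holds with constants $\underline\sigma, \bar\sigma$ that do not depend on $n$, the constant $C_\sigma$ produced by that lemma is also independent of $n$. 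This yields, for all $n \ge N_\epsilon$,
\begin{equation*}
\sup_{t \in \R} \left| \Pb(Z_n \le t) - \Pb(\tilde Z_n \le t) \right| \le C_\sigma M_\epsilon r_n \left\{ \E \tilde Z_n + \sqrt{1 \vee \log\bigl(\underline\sigma/(M_\epsilon r_n)\bigr)} \right\} + \epsilon.
\end{equation*}

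The third step is to pass to the limit $n \to \infty$ on the right-hand side. The first summand $C_\sigma M_\epsilon \cdot r_n \E \tilde Z_n$ vanishes directly by the hypothesis $r_n \E \tilde Z_n \to 0$. The second summand $C_\sigma M_\epsilon \cdot r_n \sqrt{\log(\underline\sigma/(M_\epsilon r_n))}$ also vanishes, because $r_n \to 0$ and $\log(1/r_n)$ grows slower than any positive power of $1/r_n$, so $r_n \sqrt{\log(1/r_n)} = o(1)$. Hence $\limsup_n \sup_t |\Pb(Z_n \le t) - \Pb(\tilde Z_n \le t)| \le \epsilon$, and since $\epsilon > 0$ was arbitrary, the lim sup equals zero, which is the claim.

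There is no genuine obstacle here: the statement is an $\epsilon$-packaging of the preceding lemma. The only two points that require any care are (i) keeping the constants $\underline\sigma, \bar\sigma$ (hence $C_\sigma$) uniform across $n$, which is ensured by the hypothesis on $(\F_n)$, and (ii) using the fact that the entropy factor enters only through a square root of a logarithm of $1/r_n$, so it is easily absorbed by $r_n \to 0$ without requiring any additional rate on $r_n$ beyond what is already assumed via $r_n \E \tilde Z_n \to 0$.
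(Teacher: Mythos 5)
Your proof is correct and takes essentially the same route as the paper: both reduce to the extension of CCK Lemma 2.3, exploit that $C_\sigma$ is uniform in $n$ because $\underline\sigma,\bar\sigma$ are, and absorb the logarithmic entropy factor using $r_n\to 0$. The only cosmetic difference is that the paper unpacks the $O_P(r_n)$ hypothesis with a slowly diverging sequence $\beta_n\to\infty$, whereas you use a fixed constant $M_\epsilon$ for each $\epsilon$ followed by a $\limsup$ argument; these devices are interchangeable.
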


\begin{proof}
Take $\beta_n\to\infty$ sufficiently slowly such that $\beta_nr_n(1\vee \E\tilde Z_n)=o(1)$. Then since $\Pb(|Z_n-\tilde Z_n|>\beta_n r_n)=o(1)$, by Lemma \ref{Lem:CCK-2-3}, we have
\begin{align*}
    \sup_{t\in \R} \left|\Pb(Z_n\le t)-\Pb(\tilde Z_n\le t)\right| = O\left( r_n(\E\tilde Z_n + |\log (\beta_n r_n)|) \right) + o(1) = o(1).
\end{align*}
This completes the proof.
\end{proof}

\begin{lem}\label{lem:gauss-aprox-w}
Let 
\begin{align}
W_{n}^* = \sup_{u \in [0,1]} \xi(u) \sqrt{nh} \cdot \frac{1}{n} \sumin \left[ K_{h}(U_i-u) - \E K_{h}(U_i-u) \right]
\end{align}
for some smooth function $\xi:\,[0,1]\to\R$. Then there exists a tight centered Gaussian random variable $B_n$ in $\ell^\infty([0,1])$ with the covariance function
\begin{align}
\E[B_n(u)B_n(v)] = \xi(u)\xi(v) \cdot \text{Cov}\left(K_{h}(U-u), K_{h}(U-v) \right), \quad u,v\in[0,1],
\end{align}
such that, for $\tilde W_n=\sup_{u\in[0,1]} B_n(u)$, we have the approximation
\begin{align}
    W_n^* = \tilde W_n + O_{p}\left((nh)^{-1/6}\log n\right) \label{E:Gauss-approx-linear-term}.
\end{align}
\end{lem}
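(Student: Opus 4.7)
The plan is to recognize $W_n^*$ as the supremum of the standard empirical process indexed by a VC-type class of rescaled kernel functions, and then to invoke the Gaussian coupling inequality of \citet{chernozhukov2014gaussian}. The normalizing factor $\sqrt{nh}/n$ in the definition of $W_n^*$ will be absorbed into the functions themselves so that the resulting class has an envelope of order $h^{-1/2}$ and an $L^2(P)$-radius of order one---a standard setup that produces the $(nh)^{-1/6}$ rate.

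Concretely, I would first rewrite $W_n^* = \sup_{u \in [0,1]} n^{-1/2} \sum_{i=1}^{n} g_u(U_i)$ with
\begin{align*}
    g_u(x) \bydef \sqrt{h}\, \xi(u) \bigl[ K_h(x-u) - \E K_h(U-u) \bigr], \quad u \in [0,1].
\end{align*}
Under \cref{ass:kernel}, a constant envelope $G_n(\cdot) \equiv C h^{-1/2}$ works (by boundedness of $\xi$ and of $K$), and the identity $\int K_h(\cdot-u)^2 = R_K/h$ gives $\sup_{u}\|g_u\|_{P,2}^2 \le C'$, i.e.\ a uniform $L^2(P)$-radius independent of $h$. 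Pointwise measurability and pre-Gaussianity follow from the continuity of $u \mapsto g_u$ in $L^2(P)$.

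Next, I would verify that $\mathcal{G}_n \bydef \{g_u : u \in [0,1]\}$ is VC-type with constants independent of $h$. The Lipschitz property of $K$ and the smoothness of $\xi$ make $u \mapsto g_u / \|G_n\|_\infty$ a Lipschitz map from $[0,1]$ into $L^\infty$ with an $h$-free Lipschitz constant, and since $[0,1]$ is one-dimensional, a standard covering estimate for Lipschitz-indexed classes yields
\begin{align*}
    \sup_Q N\bigl(\mathcal{G}_n, L^2(Q), \varepsilon \|G_n\|_{Q,2}\bigr) \le (A/\varepsilon)^v
\end{align*}
with $A, v$ independent of $n$.

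Finally, I would apply the VC-type Gaussian coupling inequality of \citet{chernozhukov2014gaussian} (Corollary 2.2 therein) to $\mathcal{G}_n$ with parameters $b \asymp h^{-1/2}$ and $\sigma \asymp 1$. This produces a tight centered Gaussian process $B_n$ on $\ell^\infty([0,1])$ with the stated covariance and a coupling variable $\tilde W_n = \sup_u B_n(u)$ satisfying
\begin{align*}
    \bigl| W_n^* - \tilde W_n \bigr| = O_p\!\left( \frac{b^{1/3}\sigma^{2/3}(\log n)^{2/3}}{n^{1/6}} + \frac{b \log n}{\sqrt{n}} \right) = O_p\!\left( (nh)^{-1/6}\log n \right),
\end{align*}
where \cref{ass:bandwidth}.\ref{ass:band-large} is used to absorb the secondary $n^{-1/2}$-scale remainder from CCK into the leading term. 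The main technical obstacle I anticipate is the careful bookkeeping of the $h$-dependence: one must verify that the VC-type constants $A, v$ and the implicit constants in the CCK coupling do not depend on the bandwidth, so that the rate survives uniformly as $h \to 0$. Once this is secured, the coupling itself is a direct application of the CCK machinery.
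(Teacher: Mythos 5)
Your proposal is correct and takes essentially the same route as the paper: the paper simply invokes Proposition 3.1 of \citet{chernozhukov2014gaussian}, which is precisely the specialization of their general VC-type coupling (your Corollary 2.2) to kernel-indexed classes with $c_n(u,g)=\xi(u)$, so the envelope, variance, and covering-number bookkeeping you carry out by hand is exactly what that proposition packages. No gaps.
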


\begin{proof}
Define the class of functions 
\begin{align}
\mathcal{F}_n=\{[0,1] \ni x\mapsto \xi(u) K_{h}(u-x), \quad u\in[0,1]\}
\end{align}
and note that
\begin{align}
W_n^*=\sqrt{h}\|\G_n\|_{\mathcal{F}_n}.
\end{align}
Let us apply \citet[][Proposition 3.1]{chernozhukov2014gaussian} to obtain a sup-Gaussian approximation of $W_n^*$. Indeed, in the notation of \citet[][Section 3.1]{chernozhukov2014gaussian}, take
\begin{align}
g\equiv 1, \quad \mathcal{G}=\{g\}, \quad \mathcal{I} = [0,1], \quad c_n(u,g)=\xi(u).
\end{align}
Then the representation (8) in \citet{chernozhukov2014gaussian} holds, i.e.
\begin{align}
W_n^* = \sup_{(u,g) \in \mathcal{I}\times \mathcal{G}} c_n(u,g) \sqrt{nh} \cdot \frac{1}{n} \sumin \left[ K_{h}(U_i-u) - \E K_{h}(U_i-u) \right].
\end{align}
It is now trivial to check that the assumptions of \citet[][Proposition 3.1]{chernozhukov2014gaussian} hold and the statement of the lemma follows.
\end{proof}

Let us now go back to the proof of \cref{Thm:uni-conf-bands}. Use \cref{lem:gauss-aprox-w} with $\xi(u)=u$ and note that  \cref{Lem:CCK-2-4} and \citet[][Remark 3.2]{chernozhukov2014gaussian} then imply
\begin{align}
    \sup_{t\in \R} \left|\Pb(W_n^*\le t)-\Pb(\tilde W_n\le t)\right| \to 0. \label{E:Kolm-conv-1}
\end{align}
On the other hand, by \cref{Thm:BK-expansion} we have
\begin{align}
    W_n = W_n^* + O_{a.s.}\left( h^{1/2} + h^{-1/2}n^{-1/4}l(n)\right).
\end{align}
Substituting \eqref{E:Gauss-approx-linear-term} into this equation, we obtain
\begin{align}
    W_n = \tilde W_n + O_{p}\left((nh)^{-1/6}\log n + h^{1/2} + h^{-1/2}n^{-1/4}l(n)\right).
\end{align}
\cref{ass:band-small} then implies $W_n-\tilde W_n=o_p(\log^{-1/2} n)$. \citet[][Remark 3.2]{chernozhukov2014gaussian} now implies
\begin{align}
    \sup_{t\in \R} \left|\Pb(W_n\le t)-\Pb(\tilde W_n\le t)\right| \to 0. \label{E:Kolm-conv-2}
\end{align}

Given \eqref{E:Kolm-conv-1} and \eqref{E:Kolm-conv-2}, applying the triangle inequality finishes the proof. $\qed$

%%%%%%%%%%%%%%%%%%%%%%%%%%%%%%%%%%%%%%%%%%%%%%%%%%%%%%%%%%%%%%%%%%%%%%%%%%%%%%%%%%%%%%%%%%%%%%

\section{Estimation and inference for counterfactuals, proofs}

\subsection{Proof of \cref{thm:S-unif-asy-distr}}

First, write
\begin{align}
\hat S(u^*) - S(u^*) &= \int_{u^*}^1 \varphi(u) (\hat Q(u)-Q(u))\,du \\
&- \check A(u^*)\psi(u^*)(\hat Q(u^*)-Q(u^*)) + \check A(1)\psi(1)(\hat Q(1)-Q(1)).
\end{align}
Using the classical BK expansion \eqref{E:BK-general}, we obtain
\begin{align}
\hat S(u^*) - S(u^*) &= - \int_{u^*}^1 \varphi(u)q(u)\left[\hat F(Q(u)) - u\right]\,du \\
&+ \check A(u^*) \psi(u^*)q(u^*) \left[\hat F(Q(u^*)) - u^*\right] + R_n(u^*),
\end{align}
where the composite error term
\begin{align}
R_n(u^*) &= \int_{u^*}^1 \varphi(u)r_n(u)\,du - \check A(u^*)\psi(u^*)r_n(u^*) + \check A(1)\psi(1) (\hat Q(1)-Q(1)) \\
&= O_{a.s.}\left( n^{-3/4}\ell(n)\right),
\end{align}
uniformly in $u^* \in [0,1]$. The latter rate follows from \eqref{E:BK-general} and the fact that $\hat Q(1)-Q(1) = O_{a.s.}(n^{-1})$.

Denoting $U_i=F(b_i)$, we can write
\begin{align}
\sqrt{n}\left(\hat S(u^*) - S(u^*) \right) &= \frac{1}{\sqrt{n}}\sumin\left[f_{u^*}(U_i)-\E f_{u^*}(U_i) \right] +  O_{a.s.}\left( n^{-1/4}\ell(n)\right).
\end{align}

Let us show that the class $\{f_{u^*}\,\vert\,u^*\in[0,1]\}$ is Donsker.

Since the sum of a finite number of Donsker classes is Donsker \citep[see][]{alexander1987central}, and also a constant times a Donsker class is Donsker, it suffices to show that the following (uniformly bounded) classes are Donsker,
\begin{align}
\mathcal{H} &= \{h_{u^*}: U\mapsto \int_{u^*}^1 \varphi(u)q(u) 1(U\le u)\,du\,\vert\, u^*\in[0,1]\},\\
\mathcal{G} &= \{g_{u^*}: U\mapsto \check A(u^*)\psi(u^*) q(u^*) 1(U\le u^*) \,\vert\, u^*\in[0,1]\}.
\end{align}

Indeed, for $u^*,v^*\in[0,1]$,
\begin{align}
\left| h_{u^*}(U)-h_{v^*}(U)\right| &=\left| \int_{\min(u^*,v^*)}^{\max(u^*,v^*)} \varphi(x)q(x) 1(U\le x)\,dx \right| \le |u^*-v^*| \sup_{u\in[0,1]} |\varphi(u)q(u)|,
\end{align}
i.e. $\mathcal{H}$ is Lipschitz in parameter. Its Donskerness follows by Theorem 2.7.11 and Theorem 2.5.6 in \citet{vaart1996weak}.

On the other hand, $\mathcal{G}$ is Donsker since it is a product of the set of constant functions $U \mapsto \check A(u^*)\psi(u^*) q(u^*)$ (which is trivially Donsker) and the VC class $\{1(\cdot\le u^*),\,\,u^*\in[0,1]\}$. $\qed$

\subsection{Proof of \cref{thm:T-expansion}}

We have
\begin{align}
    &\sqrt{nh}\left( \hat T_h(u^*)-T(u^*) \right) = \check \varphi(u^*) \left(\hat v_h(u^*)-v(u^*)\right) + \sqrt{nh}\left( \hat S_{\check \psi}(u^*) - S_{\psi}(u^*) \right) \\
    &= \check \varphi(u^*)\check A(u^*) \hat q_h(u^*)\left( -\eG_{n,h}(u^*) + R_{n}(u) \right) + O_p\left( h^{1/2}\right)\\
    &=-\check \varphi(u^*)\check A(u^*) \hat q_h(u^*) \eG_{n,h}(u^*) + O_p\left( n^{1/2}h^{3/2} + h^{1/2} + h^{-1/2}n^{-1/4}\ell(n) \right),
\end{align}
uniformly in $u^*\in[h,1-h]$, where the last two equations use \cref{Thm:BK-expansion} and \cref{thm:S-unif-asy-distr}. Dividing by $\hat q_h(u^*)$, which is bounded away from zero w.p.a. 1, finishes the proof. $\qed$

\subsection{Proof of \cref{Thm:uni-bands-T}}

Use \cref{lem:gauss-aprox-w} with $\xi(u)=u\varphi(u)$ and note that Lemma \ref{Lem:CCK-2-4} and Remark 3.2 in \citet{chernozhukov2014gaussian} then imply
\begin{align}
    \sup_{t\in \R} \left|\Pb(W_n^{T*}\le t)-\Pb(\tilde W_n^T \le t)\right| \to 0. \label{E:Kolm-conv-1-T}
\end{align}
On the other hand, by \cref{thm:T-expansion} we have
\begin{align}
    W_n = W_n^* + O_{a.s.}\left( h^{1/2} + h^{-1/2}n^{-1/4}l(n)\right).
\end{align}
Substituting \eqref{E:Gauss-approx-linear-term} into this equation, we obtain
\begin{align}
    W_n^T = \tilde W_n^T + O_{p}\left((nh)^{-1/6}\log n + h^{1/2} + h^{-1/2}n^{-1/4}l(n)\right).
\end{align}
Under the assumption of the theorem, $h$ decays polynomially, and hence $W_n^T-\tilde W_n^T=o_p(\log^{-1/2} n)$. Remark 3.2 of \citet{chernozhukov2014gaussian} now implies
\begin{align}
    \sup_{t\in \R} \left|\Pb(W_n^T\le t)-\Pb(\tilde W_n^T\le t)\right| \to 0. \label{E:Kolm-conv-2-T}
\end{align}

Given \eqref{E:Kolm-conv-1-T} and \eqref{E:Kolm-conv-2-T}, applying the triangle inequality finishes the proof. $\qed$

\newpage  
\bibliographystyle{ecta}
\bibliography{main}

\end{document}